\newcommand{\comment}[1]{}
\newcommand{\QED}{\mbox{}\hfill \rule{3pt}{8pt}\vspace{10pt}\par}
\newcommand{\etal}{\textit{et al.}}
\newtheorem{theorem}{Theorem}[section]
\newtheorem{lemma}[theorem]{Lemma}
\newtheorem{definition}{Definition}
\newtheorem{remark}[theorem]{Remark}
\newcommand{{\rh}}{{\widehat r}}
\newcommand{{\Rh}}{{\widehat R}}
\newcommand{\anis}[1]{{\color{green}\underline{\textsf{A:}}} {\color{blue} \emph{#1}}}
\newcommand{\manish}[1]{{\color{blue}\underline{\textsf{M:}}} {\color{red} \emph{#1}}}
\newcommand{\shortOnly}[1]{\ifthenelse{\boolean{short}}{#1}{}}
\newcommand{\onlyShort}[1]{\ifthenelse{\boolean{short}}{#1}{}}
\newcommand{\longOnly}[1]{\ifthenelse{\boolean{short}}{}{#1}}
\newcommand{\onlyLong}[1]{\ifthenelse{\boolean{short}}{}{#1}}
\newcommand{\shortLong}[2]{\ifthenelse{\boolean{short}}{#2}{#1}}
\newcommand{\longShort}[2]{\ifthenelse{\boolean{short}}{#2}{#1}} 
\begin{document}

%

\title{Fault-Tolerant Dispersion of Mobile Robots}

\author{Prabhat Kumar Chand \thanks{Indian Statistical Institute, Kolkata 700108, India.\hbox{E-mail}:~{\tt pchand744@gmail.com}.}
\and Manish Kumar \thanks{Indian Statistical Institute, Kolkata 700108, India.\hbox{E-mail}:~{\tt manishsky27@gmail.com}.}
\and Anisur Rahaman Molla \thanks{Indian Statistical Institute, Kolkata 700108, India.  \hbox{E-mail}:~{\tt anisurpm@gmail.com}.} \and Sumathi Sivasubramaniam \thanks{Indian Statistical Institute, Kolkata 700108, India. \hbox{E-mail}:~{\tt sumathivel89@gmail.com}.}}

%


\date{}

\maketitle \thispagestyle{empty}

\maketitle

\begin{abstract}
We consider the mobile robot dispersion problem in the presence of faulty robots (crash-fault). Mobile robot dispersion consists of $k\leq n$ robots in an $n$-node anonymous graph. The goal is to ensure that regardless of the initial placement of the robots over the nodes, the final configuration consists of having at most one robot at each node. In a crash-fault setting, up to $f \leq k$ robots may fail by crashing arbitrarily and subsequently lose all the information stored at the robots, rendering them unable to communicate. In this paper, we solve the dispersion problem in a crash-fault setting by considering two different initial configurations: i) the rooted configuration, and ii) the arbitrary configuration. In the rooted case, all robots are placed together at a single node at the start. The arbitrary configuration is a general configuration (a.k.a. arbitrary configuration in the literature) where the robots are placed in some $l<k$ clusters arbitrarily across the graph. For the first case, we develop an algorithm solving dispersion in the presence of faulty robots in $O(k^2)$ rounds, which improves over the previous $O(f\cdot\text{min}(m,k\Delta))$-round result by \cite{PS021}. For the arbitrary configuration, we present an algorithm solving dispersion in $O((f+l)\cdot\text{min}(m, k \Delta, k^2))$ rounds, when the number of edges $m$ and the maximum degree $\Delta$ of the graph is known to the robots.
\end{abstract}

%
%
%
\noindent {\bf Keywords:} Multi-Agent Systems, Fault Tolerant Algorithms, Crash Faults, Mobile Robots, Dispersion, Collective Exploration, Scattering, Uniform Deployment, Load Balancing, Distributed Algorithms, Time and Memory Complexity.


\section{Introduction}\label{sec:introduction}
The dispersion of autonomous mobile robots to spread them out evenly in a region is a problem of significant interest in distributed robotics, e.g. \cite{HABFM03, HABFM02}. Initially, this problem was formulated by Augustine and Moses Jr. \cite{AM18} in the context of graphs. They defined the problem as follows: Given any arbitrary initial configuration of $k \leq n$ robots positioned on the nodes of an $n$-node anonymous graph, the robots reposition autonomously to reach a configuration where each robot is positioned on a distinct node of the graph. Mobile robot dispersion has various real-world and practical applications, such as the relocation of self-driving electric cars (robots) to recharge stations (nodes). Assuming that the cars have smart devices to communicate with each other, the process to find a free or empty charging station, coordination including exploration (to visit each node of the graph in minimum possible time), scattering (spread out in an equidistant manner in symmetric graphs like rings), load balancing (nodes send or receives loads, and distributes them evenly among the nodes), covering, and self-deployment can all be explored as mobile robot dispersion problems. \cite{KA19,KMS19, KMS020, KMS22}. 

The problem has been extensively studied in different graphs with varying assumptions since its conceptualisation \cite{MollaM19,KA19,KMS020, KMS19, KMS20, KMS22, KS21, MM22, MMM020, MMM21}. In this paper, we continue the study about the trade-off of memory requirement and time to solve the dispersion problem.  Recently, Pattanayak et al. \cite{PS021} explored the problem of dispersion in a set-up where some of these mobile robots are prone to crash faults. Whenever a robot crashes, it loses all its information immediately, as if the robot has vanished from the network. This makes the problem more challenging and also makes the problem more realistic in terms of real world scenarios, where faulty robots can crash at any moment. In this paper, we have continued to study the efficacy of the problem in the same faulty environment. We have studied the dispersion problems with the rooted and arbitrary configuration of the robots with faulty setup. Both the algorithms maintain optimal level of memory requirement for each robot.


In our work, we mainly discuss two problems of mobile robot dispersion in the presence of crash-fault: i) rooted configuration and ii) arbitrary configuration. In the rooted case, all robots are placed together at a single node (called the $root$) at the start. On the other hand, the arbitrary configuration is a general configuration where the robots are initially placed in some $l<k$ clusters arbitrarily across the graph. Our first algorithm works for the rooted configuration by using depth first search (DFS) traversal and improves the round complexity from $O(f\cdot\text{min}(m,k\Delta))$ \cite{PS021} rounds to $O(k^2)$. The second algorithm for the arbitrary graph is an entirely new result whose complexity depends upon the factors like the number of faulty robots $(f)$, number of robot clusters $(l)$, total number of edges in the graph $(m)$, number of robots $(k)$ and the highest degree of the graph $(\Delta)$. In this case, we have a round complexity of  $O((f+l)\cdot \text{min}(m, k \Delta, k^2))$.

\subsection{Notations at a Glance}

\begin{center}
\begin{tabular}{ p{2cm}|p{10cm} }
    \hline
    \multicolumn{2}{c}{
    \textbf{Notations used throughout the paper}} \\
    \hline
    \centering \textbf{Symbols} & \textbf{Meaning} \\
    \hline
    \centering $G$ & The arbitrary graph acting the underlying network for the robots\\
    \hline
    \centering $n$ & The number of  nodes(vertices) of $G$\\
    \hline
    \centering $m$ & The number of edges of $G$\\
    \hline
    \centering $\Delta$ & The highest degree among the nodes of $G$\\
    \hline
    \centering $k$ & Number of robots\\
    \hline
    \centering $f$ & Number of faulty robots among $k$\\
    \hline
    \centering $l$ & Number of initial clusters of robots in the $clustered$ configuration\\
    \hline
    \centering $r_i$ & A robot with ID $i$\\
    \hline
    \centering $R_c$ & $root$ vertex in the $rooted$ configuration\\
    \hline
\end{tabular}
\end{center}
\vspace{-0.5cm}

\subsection{Challenges and Techniques} 


Both of our Algorithms, \ref{alg: rooted_config} and \ref{alg: cluster_config} have Depth First Search as their foundation. Since the nodes themselves are indistinguishable, navigation is done via the nodes that settle on the nodes.  Furthermore, the robots cannot communicate between themselves unless they are in the same node. In a crash setting, this causes immediate problems in navigation. Faulty robots, when they crash at inappropriate times, can also create endless cycles or can increase the number of clusters on the arbitrary initial configuration. In this paper, we have tried to solve the problem by overcoming the said challenges, minimising the time complexity and keeping an optimal memory requirement for each robot.

For the rooted initial configuration, we perform a DFS search on $G$ from the $root$ vertex. Robots from the root are released one by one as they explore the graph sequentially. The first robot from the root settles down at the root and sets the minimum port number (that is yet unexplored) available at the root as its current direction pointer ($cdr$). The next robot follows the $cdr$ of the previous robot and reaches a new node, where it settles and sets its own $cdr$ pointer. Continuing in a similar way, the succeeding robots build upon the DFS and continue to explore the graph. While exploring the graph, the robots can complete the exploration of a certain part of the graph via a particular vertex, when, we set the backtrack value of the robot in the particular vertex to $1$. This ensures no further robots visit this part of the graph unnecessarily. In our algorithm, the $i^{th}$ robot is sent only after $3i$ rounds have elapsed. The $i$ robot uses the $3i$ rounds to explore and if needed, return to the root (the robot returns to the root if it does not find a new node to settle after $2i$ rounds). In such a case, the robot keeps exploring new edges before it breaks out into a new path and settles at a new node there. During the algorithm, if any robot crashes, the succeeding robots from the root correct any inconsistency in the pointers of the settled robots. We claim that each such crash could only extend the number of rounds by an $O(k)$, at-most (Lemma \ref{lem: extra_cost}) and eventually the algorithm (Algorithm \ref{alg: rooted_config}) completes in $O(k^2)$ rounds.

The arbitrary configuration required a different approach, however. At the start, we have $l< k$ clusters of robots at $l$ different nodes of the graph. Our algorithm (Algorithm \ref{alg: cluster_config}) runs in phases, where each phase consists of $\text{min}(k\Delta, k^2)$ rounds. At the start of each phase, each cluster begins a $counter$ that counts down from $\text{min}(k\Delta, k^2)$. At the start of the phase, each cluster $C_i$ begins exploring the network in parallel using a DFS approach.  Unlike the rooted configuration, individual robots do not explore and return, but the entire cluster moves together. Whenever a cluster encounters a new (empty) node in the network, the robot with the current highest ID in the cluster settles, and updates its flag variables accordingly. Each time the cluster moves through an edge to a different node, the counter is decreased by 1. When the counter becomes zero, all flags are reset. After that, each cluster starts exploring the network with its current node as a point of origin. This continues until all robots in the cluster settle or the algorithm ends. The algorithm completes in $O((l+f)min(k\Delta,k^2))$ rounds with each robot requiring an optimal $O(\log (k+\Delta))$ bits of memory.

\section{Related Work}\label{sec:relatedwork}
The problem of dispersion was first introduced in \cite{AM18} by Moses Jr. \etal, where they solved the problem for different types of graphs. They had given a lower bound of $\Omega(\log n)$ on the memory of each robot (later, made more specific with $\Omega(\log(\text{max}(k, \Delta)))$ in \cite{KMS19}) and of $\Omega(D)$ on the time complexity, for any deterministic algorithm on arbitrary graphs. They also proposed two algorithms on arbitrary graphs, one requiring $O(\log n)$ memory and running for $O(mn)$ time while the other needing a $O(n\log n)$ memory and having a time complexity of $O(m)$ .

Kshemkalyani and Ali~\cite{KA19} provided several algorithms for both synchronous and asynchronous models. In the synchronous model, they solved the dispersion problem in $O(\min(m,k\Delta))$ rounds with $O(k\log \Delta)$ memory. For the asynchronous cases, they proposed several algorithms, one particularly requiring $O(\Delta^D)$ rounds and $O(D\log \Delta)$ memory, while another requiring $O(\max(\log k, \log \Delta))$ memory and having a time complexity of $O((m-n)k)$. In a later work, Kshemkalyani \etal, in \cite{KMS19} improved the time complexity to $O(\min(m,k\Delta)\log k)$ keeping the memory requirement to $O(\log n)$, while requiring that the robots know the parameters $m,n,k,\Delta$ beforehand. In subsequent work, \cite{SSKM20} kept the time and memory complexity of \cite{KMS19} intact while dropping the requirement of the robots to have prior knowledge of $m,k,\Delta$. Recently, Kshemkalyani and Sharma \cite{KS21} improved the time complexity to $O(\min(m,k\Delta))$. Works of \cite{MollaM19} and \cite{DBS21} used randomisation, which helped to reduce the memory requirement for each robot.

In \cite{KMS20}, Kshemkalyani \etal, studied the problem in the \textit{Global Communication Model}, in which the robots can communicate with each other irrespective of their positions in the graph\footnote{In the \textit{Local Communication Model} robots can communicate with each other only when they are at the same node.}. The authors obtained a time complexity of $O(k\Delta)$ rounds when $O(\log (k+\Delta))$ bits of memory were allowed at each robot.  Whereas, when robots were allowed $O(\Delta+\log k))$ bits, the number of rounds reduced to $O(\min(m, k\Delta))$. Both were for arbitrary initial configuration of robots. They also used BFS traversal techniques for investigating the dispersion problem. The BFS traversal technique yielded a time of $O((D+k)\Delta(D+\Delta))$ rounds with $O(\log D + \Delta\log k)$ bits of memory at each robot, using global communication, for arbitrary starting configuration of robots. Here $D$ denotes the diameter of the graph. The problem was also studied on \textit{dynamic} graphs in \cite{KMS020},\cite{AAMKS18},\cite{LFPPSV20}. \textit{Graph Exploration}, which is a related problem, has also been intensively studied in literature \cite{BGHIKK09} \cite{CFIKP08} \cite{DDKPU13} \cite{FIPPP05}

The dispersion problem has also been recently studied for configurations with faulty robots. In\cite{MMM20}, Molla \etal, considered the problem for anonymous rings, tolerating weak Byzantine faults (robots that behave arbitrarily but cannot change their IDs). They gave three algorithms \textbf{(i)} the first one being memory optimized, requiring $O(\log n)$ bits of memory, $O(n^2)$ rounds and tolerating up-to $n-1$ faults.\textbf{(ii)} the second one is time optimized with $O(n)$ rounds, but require $O(n\log n)$ bits of memory, tolerating up-to $n-1$ faults. \textbf{(iii)} the third one runs in $O(n)$ time and $O(\log n)$ memory but cannot tolerate more than $[\frac{n-4}{17}]$ faulty robots. In \cite{MMM21}, the authors proposed several algorithms for dispersion with some of them tolerating strong Byzantine robots (robots that behave arbitrarily and can tweak their IDs as well). Their algorithms are mainly based on the idea of gathering the robots at a root vertex, using them to construct an isomorphic map of $G$ and finally dispersing them over $G$ according to a specific protocol. However, their algorithms take exponential rounds for strong Byzantine robots starting from an arbitrary configuration. For the rooted configuration, their algorithm takes $O(n^3)$ rounds, but tolerates no more than $[n/4-1]$ strong Byzantine robots. Dispersion under \textit{crash faults} has been dealt with in \cite{PS021}. In \cite{PS021}, Pattanayak \etal, have considered the problem for a team of robots starting at a rooted configuration, with some robots being crash prone. Their algorithm handles an arbitrary number of crashes, with each robot requiring $O(\log(k+\Delta))$ bits of memory. The algorithm completes in $O(f\cdot \min(m,k\Delta))$ rounds. In our paper, we improve this time complexity while keeping the memory requirement to optimal and also extend the problem for the robots starting in arbitrary configuration. A comparison between our results and the most aligned works is shown in Table~\ref{tab:results}.

\begin{center}
\begin{table}[H]
    \centering
    \begin{tabular}{ p{4.5cm} p{2cm} p{2cm} p{4.5cm} }
    \hline
    \textbf{Algorithm}  & \textbf{Initial Config.} & \textbf{Crash Handling} & \textbf{Time}\\
    \hline
    Kshemkalyani \etal \cite{KS21}* & Arbitrary & No & $O(\min(m,k\Delta))$ \\
    Pattanayak \etal \cite{PS021} & Rooted & Yes & $O(f \cdot \min(m, k\Delta) )$ \\
    \textbf{Algorithm in Sec.~\ref{sec: rooted_configuration} }& Rooted & Yes & $O(k^2)$\\
    \textbf{Algorithm in Sec.~\ref{sec: cluster} }& Arbitrary & Yes & $O((f+l) \cdot
    \min(m,k\Delta, k^2))$\\
    
\hline
\end{tabular}
    \caption{Results on Dispersion of $k \leq n$ robots with $f\leq k$ faulty robots on $n$-node arbitrary anonymous graphs having $m$ edges such that $\Delta$ is the highest degree of the graph in the local communication model. Each uses an optimal memory of $O(\log(k+\Delta))$ bits on each robot. \centering *\textit{The best known result as of now for fault-free dispersion.}}
    \label{tab:results}
\end{table}
\end{center}

\vspace{-1.2cm}
\section{Model and Problem Definition}\label{sec:model}
We now elaborate on our model and problem in detail.\\\\
\textbf{Graph: } The underline graph $G$ is connected, undirected, unweighted and anonymous with $|V| = n$ vertices and $|E| = m$ edges. The vertices of $G$ (also called nodes) do not have any distinguishing identifiers or labels. 
The nodes do not possess any memory and hence cannot store any information. The degree of a node $i\in V$ is denoted by $\delta_i$ and the maximum degree of $G$ is $\Delta$. Edges incident on $i$  are locally labelled using a port number in the range $[1,\delta_i]$. A single edge connecting two nodes receives two independent port numbers at either end. The edges of the graph serve as $routes$ through which the robots can commute. Any number of $robots$ can travel through an edge at any given time.\\\\
\textbf{Robots: }We have a collection of $k\leq n$ robots $\mathbb{R} = \{r_1,r_2,...,r_k\}$ residing on the nodes of the graph. Each robot has a unique ID and has some memory to store information. The robots cannot stay over an edge, but one or more robots can be present at a node at any point of time. A group of robots at a node is called $co-located$ robots. Each robot knows the port number through which it has entered and exited a node.\\\\
\textbf{Crash Faults: }The robots are not fault-proof and a faulty robot can $crash$ at any time during the execution of the algorithm. Such $crashes$ are not recoverable and once a robot $crashes$ it immediately loses all the information stored in itself, as if it was not present at all. Further, a crashed robot is not visible or sensible to other robots. We assume there are $f$ faulty robots such that $f\leq k$.\\\\
\textbf{Communication Model:} Our paper considers a local communication model where only the co-located robots can communicate among themselves.\\ 

\noindent\textbf{Time Cycle: } Each robot $r_i$, on activation, performs a $Communicate-Compute-Move$ $(CCM)$ cycle as follows.
\vspace{-0.1cm}
    \begin{itemize}
        \item Communicate: $r_i$ reads its own memory along with the memory of other robots co-located at a node $v_i$.
        \item Compute: Based on the gathered information and subsequent computations, $r_i$ decides on several parameters. This includes, deciding whether to settle at $v_i$ or otherwise determine an appropriate exit port, choosing the information to pass/store at the settled robot and the information to carry along-with, if, exiting $v_i$.
        \item Move: $r_i$ moves to the neighboring node using the computed exit port.
    \end{itemize}
\vspace{-0.3cm}
We consider a synchronous system, where every robot is synchronized to a common clock and becomes active at each time cycle or round.\\\\
\textbf{Time and Memory Complexity: }We evaluate the time in terms of the number of discrete rounds or cycles before achieving {\sc Dispersion}. Memory is the number of bits of storage required by each robot to successfully execute {\sc Dispersion}. Our goal is to solve {\sc Dispersion} using optimal time and memory.
\\\\

Given a simple, anonymous, port-labelled, connected graph $G$ with $n$ memory-less nodes and $m$ edges with maximum degree $\Delta$. Consider a team of $k\leq n$ mobile robots residing arbitrarily on the nodes of the graph, with some of the robots being prone to crashes. We want to devise an algorithm such that each robot, which eventually remains active (some of the robots can crash during the algorithm), re-positions itself to a distinct node of $G$ and remain stationary thereafter i.e., no two active robots occupy a single node at the conclusion of the algorithm.  Below, we formally state the problem of fault-tolerant dispersion.
\begin{definition}[Fault-Tolerant Dispersion]
\label{def:FT-dispersion}
Given $k \leq n$ robots, up to $f$ of which are faulty (which may fail by crashing), initially placed arbitrarily on a graph of $n$ nodes, the non-faulty robots, i.e., the robots which are not yet crashed must re-position themselves autonomously to reach a configuration where each node has at most one (non-faulty) robot on it and subsequently terminate.
\end{definition}

\section{Our Results}

We consider a team of $k\leq n$ mobile robots placed on an arbitrary, undirected simple graph, consisting of $n$ anonymous, memory-less nodes and $m$ edges. The ports at each node are labelled. The robots have unique IDs and a restricted amount of memory (measured in number of $bits$). These robots have some computing capability and can communicate with other robots, only when they are at the same node. We consider two different starting scenarios, based on the initial configuration of the robots. When the robots start from a single node, we call the configuration $rooted$, otherwise, we call it an $arbitrary$ configuration. We further assume that $f\leq k$ faulty robots in the network are prone to crash at any point. Our first algorithm for the rooted configuration crucially uses depth first search (DFS) traversal and improves the round complexity from $O(f\cdot\text{min}(m,k\Delta))$ \cite{PS021} rounds to $O(k^2)$. The second algorithm for the arbitrary configuration is an entirely new result whose complexity depends upon the factors: the number of faulty robots $(f)$, number of robot clusters $(l)$, total number of edges in the graph $(m)$, number of robots $(k)$ and the highest degree of the graph $(\Delta)$. In this case, the round complexity is  $O((f+l)\cdot \text{min}(m, k \Delta, k^2))$. The results are summarized in the following two theorems:


\medskip
\noindent \textbf{Theorem }\ref{thm: single-source}\textbf{(Crash Fault with Rooted Initial Configuration)} \textit{Consider any rooted initial configuration of $k \leq n$ mobile robots, out of which $f \leq k$ may crash, positioned on a single node of an arbitrary, anonymous $n$-node graph $G$ having $m$ edges, in synchronous setting {\sc Dispersion} can be solved deterministically in $O(k^2)$ rounds with $O(\log (k+\Delta))$ bits memory at each robot, where $\Delta$ is the highest degree of the graph.}

\medskip
Theorem \ref{thm: single-source} improves over the previously known algorithm (in the worst case, improvement is from cubic to quadratic) that takes $O(f\cdot\min(m,k\Delta))$ rounds for $f$ faulty robots \cite{PS021}. The theorem also matches the optimal memory bound ($\Omega(\log(\text{max}(k,\Delta)))$ \cite{KMS19}) with $O(\log(k+\Delta))$ bit memory and can handle any number of crashes. 

\medskip
\noindent \textbf{Theorem }\ref{thm: clustered}\textbf{(Crash Fault with arbitrary Initial Configuration)} \textit{Consider any arbitrary initial configuration of $k \leq n$ mobile robots,  out of which $f \leq k$ may crash and  positioned on $l \leq k/2$ nodes of an arbitrary and anonymous $n$-node graph $G$ having $m$ edges, in synchronous setting {\sc Dispersion} can be solved deterministically in $O((f+l)\cdot\text{min}(m,k\Delta, k^2))$ time with  $O(\log(k+\Delta))$ bits memory at each robot.}

\medskip
Theorem \ref{thm: clustered} solves the dispersion for arbitrary configuration with optimal memory per robot. The time complexity matches the one conjectured by Pattanayak \etal~\cite{PS021}. When $f,l$ and $\Delta$ are constants, the time complexity matches the lower bound of $\Omega(k)$. Moreover, the algorithm can handle any number of faulty robots. The results are summarized in the Table~\ref{tab:results}.

\section{Crash-Fault Dispersion from Rooted Configuration}\label{sec: rooted_configuration}
In this section, we present a deterministic algorithm that disperses the robots with single-source (rooted configuration) in adaptive crash fault. Our goal is to minimise the round complexity as well as keep the memory of the robots low (in bits). 
\vspace{-0.5cm}
\subsection{Algorithm}
In the absence of faulty nodes, one can run the DFS (depth first search) algorithm to solve the robot dispersion problem in $O(\textit{min}(m, k\Delta))$ rounds. But in the presence of crash faults setup, due to crashes, it becomes challenging to explore the graph. Classic dispersion algorithms rely on the robots themselves to keep track of the paths during exploration. The presence of a crashed robot in this instance may lead to an endless cycle. Therefore, our goal is  to ensure the dispersion of mobile robots despite the presence of faulty robots. 

In the rooted configuration, to manage the presence of faults, we avoid exploring the graph together with all the robots. That is, the graph is explored sequentially such that each robot $r_i$ ($1 \leq i \leq k$) does not begin exploring the graph, until the previous robot $r_{i-1}$ is guaranteed to have settled. During exploration, whenever a robot $r_i$ finds an empty node it settles down at that spot. Let us call this algorithm as {\sc Rooted-Crash-Fault-Dispersion}. Below, we explain the algorithm in detail.\\

\noindent \textbf{Functionality:} For simplicity, let us assume that the robots' ID lies in the range of $[1,\,k]$. Otherwise, the robots can map their IDs from the actual range to the range $[1,\,k]$, since the IDs are distinct. We denote the rooted configuration by $R_c$. We slightly abuse notation and use $R_c$ to indicate both the root and the initial gathering of robots. Robots at $R_c$ traverse the graph via DFS (Depth First Search) approach, where the decision of which edge to traverse first is based on the port numbers. The process proceeds in increasing order of IDs, starting with the robot with the minimum ID at $R_c$. $R_c$ then sends each robot to explore the graph via DFS.

Let the robot with the current minimum ID be $r_i$. Then $r_i$ begins to explore the graph via DFS (starting with the minimum port number at $R_c$). Once it leaves $R_c$, it has $3i$ rounds within which it can either i) settle at the first empty node it finds or ii) return to $R_c$ if it does not find an empty node to settle within $2i$ rounds.  If $r_i$ reports to $R_c$ within $3i$ rounds, then $R_c$ ensures that it does not release the robot with the next lowest ID, say  $r_{i+1}$.
This can be guaranteed as $r_i$ needs to traverse at most $(i-1)$ edges to explore the sub-graph traversed by $r_{i-1}$. $r_i$ requires at most $i$ rounds to return to the base $R_c$ since the next traversed edge might lead to the already visited node which is not empty. As $r_i$ requires $i$ rounds to report at the $R_c$, therefore, $r_i$ explores the graph for only $2i$ rounds. Notice that a robot will not traverse at the distance of more than $(i+1)$, before that, there will be an empty edge at a distance (distance from the root) of $(i+1)$ and the robot will settle down there. If $r_i$ did not find the empty node within $2i$ rounds then it starts to traverse towards $R_c$. In this way, $r_i$ reports to $R_c$ within $3i$ rounds so that $R_c$ does not send another robot to explore the graph. $R_c$ re-sends $r_i$ to explore the graph.
In this way, any $r_i$ traverses the graph until it finds an empty node. Note that in our process, we ensure that there are no two robots that are exploring the graph at the same time. 

To maintain the protocol, each $r_i$ maintains the following fields. Its ID $(r_i)$, a parent pointer ($r_i.parent$) that represents the edge it traversed,  a current direction pointer $(r_i.cdr)$ which indicates the direction it is required to follow. And finally, a backward traversal value $(r_i.B)$ which is initially $0$, and is set to $1$ once the backward traversal is complete. Here, our procedure performs the traditional DFS protocol but one-by-one, that is, the robots do not explore the graph simultaneously. In the following subsection, we give a detailed procedure the DFS procedure.

\subsubsection{DFS Traversal Procedure}\label{DFS Procedure}
{
\noindent Let the robots positioned initially at root $R_c$ be denoted by $R_c=\{r_1,r_2,\dots,r_k\}$, where $r_i$ is robot with ID $i$. For the \textit{rooted configuration}, each robot stores the following four variables :
    \begin{enumerate}
            \item $r_i.parent:$ the port number through which the robot $r_i$ has entered a new empty node and has settled. Initially, it is assigned to $null$.
            \item $r_i.cdr:$ the current direction of a settled robot $r_i$. A settled robot sets $cdr$ as the minimum available port number, however, as the algorithm progresses and more nodes are explored the robot $i$ can change its $cdr$ value, if needed.
            \item $r_i.B:$ a binary variable denoting backtrack status, initially assigned to $0$, takes the value $1$, if and only if, every sub-graph accessible through each of $r_i$'s child edges has been explored and now, no new part of $G$ could be explored through the node containing $r_i$.
            \item $r_i.settled:$ a binary variable, initially assigned $0$, takes the value $1$, if and only if, $r_i$ has settled at a particular node of $G$
    \end{enumerate}
\noindent\textbf{Update procedure:} In the first round, the robot $r_1$ assigns $r_1.settled\leftarrow1$ and sets $r_1.cdr\leftarrow1$, the minimum port number available at $R_c$. In the next round, the robot $r_2$ at $R_c$ reads and exits through $r_1.cdr$ to land at a new empty node (say $w$) and consequently sets $r_2.settled\leftarrow1$. Assume that $r_2$ arrived at $w$ through port $p_w$. $r_2$ writes $r_2.parent\leftarrow p_w$ and if $r_2.cdr \leq deg(w)$,  $r_2.cdr\leftarrow r_2.cdr+1$, if port $r_2.cdr+1\neq p_w$, else $r_2.cdr\leftarrow r_2.cdr+2$. In the third phase, the incoming robot $r_3$, following the $cdr$ pointers to $w$, can now decide to march forward or backtrack based on the following conditions:

\begin{itemize} \label{update}

    \item \textbf{forward} : $if$ ($p_w = r_2.parent$ or $p_w=$ old value of $r_2.cdr$) and (there is at least one port at $w$ that has not been taken yet). The robot $r_3$ exit $w$ through port $r_2.cdr$

    \item \textbf{backtrack} : $if$ ($p_w = r_2.parent$ or $p_w=$  old value of $r_2.cdr$) and (all the ports of $w$ have been taken already). Then, the robot $r_3$  exits $w$ through port $r_2.parent$. In such case, $r_3$ also sets the backtrack value of $r_2.B\leftarrow1$
    
\end{itemize}

There is another condition, denoting the onset of a cycle, under which choosing the backtrack phase is in order. When a robot $r_j$ enters $x$ through $p_x$ and robot $r$ is settled at $x$,
\begin{itemize}
    \item \textbf{backtrack} : $if$ ($p_x \neq r.parent$ and $p_x \neq $ old value of $r.cdr$). The robot $r_j$ exits $x$ through port $p_x$ and no variables of $r$ are altered.
\end{itemize}

Each robot explores the graph one by one and settles eventually. In the $ith$ phase, a robot $r_i$ has $3\cdot i$ number of rounds, in which, it can either settle down or explore $\frac{i}{2}$ new edges of $G$ (it takes i rounds for $r_i$ to come to $r_{i_1}$, takes additional $i$ rounds to explore $\frac{i}{2}$ new edges and return to $r_{i-1}$ and further $i$ rounds to return at $R_c$, following the $parent$ pointers). \\

\medskip
\noindent Now, coming back to the main algorithm, we now explain the \textbf{Decision} part.
\medskip\\
\noindent\textbf{Decision:} If $r_i$ encounters an unexpected child, $r_u$ i.e., a child whose parent and current pointer direction are set in the inappropriate direction w.r.t the perspective of $r_i$. It considers (correctly) that $r_u$ replaced a robot that has previously crashed. 
In such a situation, $r_i$ changes the parent of $r_u$ appropriately, i.e., the minimum available port number other than $r_u.parent$ (see, Figure~\ref{fig:rooted}). 

\begin{algorithm}[ht!]
    \caption{\sc Rooted-Crash-Fault-Dispersion}
    \begin{algorithmic}[1]
    \Require{An anonymous network of $n$ nodes where $f$ robots are faulty such that $f \leq k \leq n$. $k$ is the number of robots.}
    \Ensure{Robots' Dispersion.}
    \Statex
    \State Each robot $r_i$ maintains its ID, parent pointer, current direction pointer and the backward traversal pointer as $\langle r_i, r_i.parent, r_i.cdr, r_i.B \rangle$, respectively. Initially, $r_i.cdr$ is the same, i.e., minimum port number, $r_i.B=0$ for all the $k$ robots at rooted configuration. $r_i.B=1$ indicates that backtrack is done for that settled robot, while $r_i.B=0$ represents backtracking is remaining.
    
    \For{$7k^2$ rounds} \Comment{As stated in Lemma \ref{lem: exact_cost}}
        \State Each rooted configuration $(R_c)$ traverse the graph via DFS (Depth First Search) by sending the current minimum ID robot $(r_i)$ based on the current direction pointer at $R_c$ after every $3i$ round. If robot $r_i$ reports to $R_c$ within $3i$ rounds, then $R_c$ resends $r_i$ until there is no report regarding $r_i$.\Comment{DFS described in Section~\ref{DFS Procedure}}
        
        \State Each $r_i$ settles down at the first empty node it finds and sets all the attributes accordingly. \Comment{See functionality for more detail.}
        
         \If{$r_i$ does not find an empty node in $2i$ rounds}\label{line: report_3i}
            \State $r_i$ proceed to reports $R_c$. \Comment{$r_i$ reports within $3i$ rounds.}
        \Else
            \State $r_i$ settles at the empty node.
        \EndIf
        
        \If{$r_i$ encounters an unexpected parent for $r_u$ ($u<i)$}\Comment{See in decision.}\label{line: unexpected_encounter}
            \State $r_i$ resets the parent pointer, $r_u.parent$ and current direction pointer, $r_u.cdr$ based on minimum port available.
        \EndIf
    \EndFor
    \State All the non-faulty robots settle at a unique node. 
    \end{algorithmic}
\end{algorithm}\label{alg: rooted_config}

\begin{figure}[ht!]
\centering
  \includegraphics[scale=0.36]{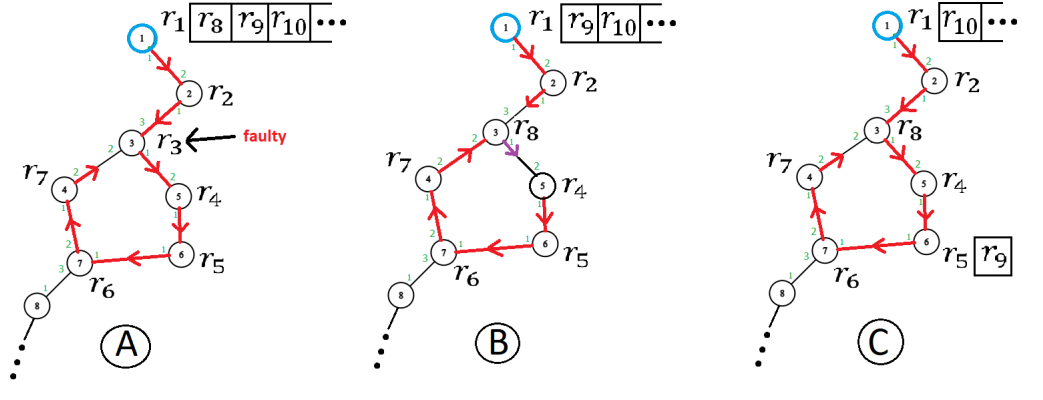}
  \caption{As seen in Fig. $A$, the robots start sequentially from root node $1$ (\textcolor{blue}{blue}) with robot $r_7$ settling and setting its $cdr$ $pointer$. In the next phase, while $r_8$ was at node $6$, $r_3$ crashed forcing $r_8$ to settle at node $3$, previously occupied by $r_3$ (Fig. $B$). Note that $r_8$ now marks $r_7$ as its $parent$. Such pointers are being \textit{corrected} by subsequent robot $r_9$ which has started from the root and now moving (Fig. $C$) . $r_9$ will subsequently meet $r_8$, then backtrack and settle at node $8$. 
  }\label{fig:rooted}
\end{figure}

\begin{lemma}\label{lem: non_faulty}
In the non-faulty setup, round complexity is $O(k^2)$.
\end{lemma}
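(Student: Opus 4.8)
The plan is to analyze the sequential DFS released from $R_c$ and show that the total number of rounds spent is $\sum_{i=1}^k O(i) = O(k^2)$. First I would recall the key structural fact already established in the \textbf{Functionality} section: when robot $r_i$ is released, the part of $G$ that has been explored so far is exactly the DFS tree built by $r_1,\dots,r_{i-1}$, which contains $i-1$ settled robots and hence at most $i-1$ tree edges. Consequently $r_i$, following the $cdr$ pointers from $R_c$, reaches the current ``frontier'' of the DFS (the deepest settled robot along the active branch, possibly after a backtrack along a few exhausted nodes) within $O(i)$ rounds, because the active DFS path has length at most $i-1$. At the frontier $r_i$ either finds an unexplored port leading to an empty node — in which case it settles after one more move — or it finds that the current node's ports are all used, marks $B\leftarrow 1$, backtracks one edge, and continues; the cost bound $3i$ per release is exactly the budget the algorithm allots (i rounds out to the frontier via $r_{i-1}$, i rounds of local exploration/backtracking covering $\le i/2$ new edges, i rounds to return to $R_c$ if no empty node was found).

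Next I would argue the progress/termination structure. Define a potential that counts, over all settled robots, the number of their incident ports not yet traversed, plus a term for the number of robots still at $R_c$. Every release of a robot $r_i$ either (a) decreases the ``robots at $R_c$'' count by one permanently (when $r_i$ settles), or (b) strictly explores at least one new edge and possibly sets some $B$ flags to $1$, after which $r_i$ returns to $R_c$ and is re-sent. Case (a) happens exactly $k$ times total (once per robot). For case (b), each re-send of the same robot $r_i$ consumes $\Theta(i)$ rounds but explores $\Theta(i)$ genuinely new edges; since $G$ has at most $m$ edges and since along any root-to-frontier path there are at most $i-1$ edges, the number of re-sends of $r_i$ before it settles is $O(1)$ in the relevant regime — more carefully, the DFS frontier advances by at least one new settled node within $O(1)$ re-sends of $r_i$, because a DFS that is fed $\Theta(i)$ fresh edge-explorations per round either hits an empty node or exhausts a subtree and backtracks, and the whole exploration forms one global DFS on the tree of settled robots whose size never exceeds $k$. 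Summing: the $i$-th robot contributes $O(i)$ rounds to ultimately settle, so the total is $\sum_{i=1}^{k} O(i) = O(k^2)$.

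The cleanest way to package this is: (1) show a single global DFS traversal on the (growing) tree of settled nodes has $O(k)$ edge-traversals counted with multiplicity two — standard DFS — so in the non-faulty case the ``logical'' work is $O(k)$ tree-edge traversals; (2) show each such logical traversal step, when implemented by releasing $r_i$ from $R_c$, costs $O(i) = O(k)$ physical rounds because $r_i$ must walk from $R_c$ to the current frontier and back along a path of length $O(i)$; (3) multiply to get $O(k)\cdot O(k) = O(k^2)$. I would also note the algorithm runs its outer loop for $7k^2$ rounds (as the algorithm statement anticipates via Lemma~\ref{lem: exact_cost}), so it is enough to show the DFS completes within that many rounds, and by the above it does, with $O(k^2)$ rounds sufficing.

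The main obstacle I anticipate is step (2)/the re-send accounting: making rigorous that a robot $r_i$ is re-sent only $O(1)$ times — equivalently, that each ``trip'' of $r_i$ makes $\Omega(i)$ units of DFS progress — since a naive bound would allow $r_i$ to be re-sent up to $i$ times (once per new edge), which would blow the bound up to $O(k^3)$. The resolution is that $r_i$ gets $2i$ rounds of exploration per trip, during which it traverses $\Omega(i)$ new edges, and a DFS that has traversed $\Omega(i)$ new edges from a tree of size $i-1$ must have either found an empty node (so $r_i$ settles and the trip count for $r_i$ ends) or backtracked past enough exhausted nodes to permanently shrink the active path; a careful amortized argument over the combined DFS shows the total exploration work is $O(k)$ edges and the total ``walk to frontier'' overhead is $O(k)$ per settled robot, giving $O(k^2)$ overall. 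I would present this amortization as the crux of the proof.
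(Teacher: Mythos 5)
Your high-level target ($O(k^2)$ total) is right, but the amortization you lean on is not, and this is a genuine gap rather than a presentational issue. Your two key claims --- that each robot $r_i$ is re-sent only $O(1)$ times (so contributes $O(i)$ rounds), and that the ``logical'' work is only $O(k)$ edge-traversals on the tree of settled nodes --- are both false. The DFS in this algorithm must also traverse the \emph{non-tree} edges among settled nodes: a port is only ruled out after some robot walks through it, finds an occupied node, and bounces back, and a node's backtrack flag is set only after \emph{all} its ports have been tried. There can be $\Theta(\min(m,k^2))$ such edges, and a single robot can be forced to check almost all of them. Concretely, take a clique on the root and $k-1$ other vertices plus one pendant vertex: the first $k-1$ robots each settle after a short forward walk (your ``frontier'' case), leaving $\Theta(k^2)$ clique edges untraversed; the last robot must then exhaust essentially all of them before reaching the pendant vertex, which takes $\Theta(k)$ re-sends and $\Theta(k^2)$ rounds for that one robot. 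So a per-robot bound of $O(i)$ rounds, and a global bound of $O(k)$ explored edges, cannot be proved --- and multiplying ``$O(k^2)$ logical steps, each costing $O(k)$ to reach the frontier'' would give the very $O(k^3)$ you were trying to avoid.

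The paper's proof amortizes over \emph{edges}, not robots: within each $3i$-round trip, at least $i/2$ previously untraversed edges are explored (the $O(i)$ walk to the frontier is charged against them, i.e.\ $O(1)$ rounds per new edge), and in the fault-free setting the $cdr$/backtrack bookkeeping guarantees that an edge checked by $r_i$ is never traversed again by any $r_j$ with $j>i$. Since the wasted checks are confined to edges among settled nodes, of which there are at most $i(i-1)/2\le k(k-1)/2$ over the whole execution, the unsuccessful trips cost $O(k^2)$ rounds in total, and the $k$ successful settling trips add $\sum_i O(i)=O(k^2)$. So one robot may well absorb $\Theta(i^2)$ rounds (the paper's second case); the bound survives because that work is done once globally, not because each robot is cheap. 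Reframing your argument around this edge-charging (rather than an $O(1)$ re-send count per robot) is what is needed to close the gap.
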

\begin{proof}
 In a non-faulty setup, each robot behaves robustly and there are no crashes. Therefore, after the backtracking flag is set on a node, an edge is not traversed again during the DFS traversal. In traversing a graph from $R_c$, two kinds of situation may arise, either a robot $r_i$ reaches an empty node after $O(i)$ edge traversals, or it traverses $O(i^2)$ edges. In the first case, there is an empty node at a distance of $O(i)$. Therefore, $r_i$ settles at the empty node after $O(i)$ rounds. If such kind of situation arises repeatedly, then the algorithm takes $O(1)+O(2)+ \cdots + O(k) = O(k^2)$ rounds. In the second case, there might be a situation such that $r_i$ traverses $O(i^2)$ edges to find the empty node and only encounters previously settled nodes (at most $i(i-1)/2$ edges). 
 More preciously, $i/2$ new edges are traversed in $3i$ rounds. Notice that a robot will traverse only earlier traversed nodes at the distance $(i+1)$, if not,  then there will be an empty edge at a distance (distance from the root) of $(i+1)$ and the robot will settle down there. Therefore, $r_i$ covers $O(i^2)$ edges in $O(i^2)$ rounds  and future robots (i.e., robots having  ID $r_j$; $\forall$ $j>i$) will not traverse these edges again. Hence, we can conclude that non-faulty setup takes $O(k^2)$ rounds in the given model.
\end{proof}

\begin{lemma}\label{lem: extra_cost}
In the faulty setting, a crashed robot may bring about an extra cost of $O(k)$ rounds in comparison to the non-faulty setting.
\end{lemma}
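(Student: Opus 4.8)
The goal is to show that the failure of one robot inflates the running time by at most $O(k)$ rounds relative to the crash-free execution analysed in Lemma~\ref{lem: non_faulty}. The plan is a case analysis on the state of the failing robot at the instant it crashes, using throughout the two structural facts of the algorithm: at every point fewer than $k$ robots are settled, so the current DFS (the chain of $cdr$-pointers from $R_c$) is a path of length $<k$, and the whole DFS has so far committed to at most $k-1$ tree edges.

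First, suppose the crashing robot $r_i$ is still exploring, i.e. it has not yet settled. Since $R_c$ cannot observe the crash, it behaves exactly as if $r_i$ had settled: after $r_i$'s $3i\le 3k$-round window expires with no report, $R_c$ releases $r_{i+1}$. All settled robots and their pointers are untouched, so the partial DFS is unchanged; $r_{i+1}$ walks the length-$<k$ path down to the frontier and continues as it would have. The only loss is the at most $3k$ rounds ``spent'' on $r_i$'s aborted trip together with the $O(k)$ rounds $r_{i+1}$ needs to re-reach the frontier, i.e. $O(k)$.

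Second, suppose the crashing robot $r_a$ had settled at a node $v$; then $v$ becomes empty and its fields $(parent, cdr, B)$ are erased. If $v$'s subtree had already been fully explored ($v.B$ had been set), no future robot ever revisits that part of $G$, so dispersion of the surviving robots is unaffected and the crash costs nothing. Otherwise $v$ lies on the current frontier path. The next robot that descends the $cdr$-pointers reaches $v$, finds it empty and settles there, re-initialising $v.cdr$ to the smallest port; a later robot following the (now reset) $v.cdr$ re-traverses the portion of the frontier path hanging below $v$, but this path has length $<k$ and every node on it is occupied, so the robot merely walks through it (backtracking at cycle edges by the onset-of-cycle rule), costing $O(k)$ rounds. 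Finally, if a robot has ``fallen into'' $v$ from a non-parent port (as with $r_8$ in Figure~\ref{fig:rooted}), the next robot descending the proper DFS path detects the unexpected parent/child at $v$ and repairs $v.parent,v.cdr$ as in the \textbf{Decision} step; this is $O(1)$ local work, performed at most once per node of a length-$<k$ path, hence $O(k)$ in total. Adding the (at most three) contributions gives $O(k)$ extra rounds for this crash.

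The step I expect to be the main obstacle is ruling out a cascade: I must argue that the inconsistency introduced by one crash is repaired by the first one or two robots that pass through the affected node and that, once repaired, the configuration is again a valid partial DFS, so the $O(k)$ repair cost is incurred only once and the effects of distinct crashes are additive (yielding the $f\cdot O(k)$ aggregate later combined with Lemma~\ref{lem: non_faulty} to prove Theorem~\ref{thm: single-source}). Concretely, I would phrase a DFS-consistency invariant on the settled robots' pointers, show that a single crash breaks it only in a way repairable within $O(k)$ rounds, and check that $R_c$'s report-and-resend loop is triggered at most a constant number of extra times per crash, so that no surviving robot's trip count grows by more than $O(1)$.
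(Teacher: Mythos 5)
Your proposal is correct and follows essentially the same route as the paper's proof: the crash of a settled robot empties its node, the next arriving robot re-settles there with re-initialised pointers, and the work already done by the crashed robot (a path/portion of the DFS of length at most $O(k)$) must be re-traversed, giving an extra $O(k)$ rounds per crash. Your treatment is somewhat more explicit than the paper's (you separately handle a crash before settling, the case where the backtrack flag was already set, and the pointer-repair step of the \textbf{Decision} rule, which the paper's shorter argument leaves implicit), but the key quantitative step and overall structure coincide.
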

\begin{proof}
In the faulty setup, a robot might crash at any time and the respective node becomes empty, say node $v_i$. As a consequence, the information held by that robot (at the node $v_i$) is also lost. Accordingly, the next robot that discovers $v_i$, say $r_i$, settles down at $v_i$. A robot possesses the information of current direction, parent node and backtracking status apart from its own ID. For that reason, the current direction pointer is pointing towards the edge based on its least labelled edge. But there might be the case (in the worst case) that the last crashed node has traversed up to $(i-2)$ edges which should be traversed again by the $r_{i+1}$. This takes extra $O(i)$ rounds. Also, in the worst case, this value can be $O(k)$ since the number of robots is $k$. Hence, the lemma.
\end{proof}

\begin{lemma}\label{lem: one_robot_moving}
There is (at most) one robot moving (neither settled at its respective node, nor at  rooted configuration $R_c$) at any instance.
\end{lemma}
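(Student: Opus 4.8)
The plan is a straightforward induction on rounds, leveraging the strictly sequential manner in which robots leave the root in Algorithm~\ref{alg: rooted_config}. First I fix terminology: call a robot a \emph{mover} in a given round if it is neither settled (i.e.\ $settled=1$) nor sitting among the co-located robots at $R_c$. The goal is to show that in every round the number of movers is at most one. The base case covers rounds $1$ and $2$: $r_1$ sets $settled\leftarrow 1$ while still at $R_c$, so there is no mover in round $1$, and in round $2$ only $r_2$ has left $R_c$, so there is exactly one mover.

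For the inductive step I would argue that $R_c$ releases a new explorer only \emph{after} the previously dispatched robot has ceased to be a mover. By construction $R_c$ sends the current-minimum-ID robot $r_i$ and then waits; by the timing analysis of the DFS traversal in Section~\ref{DFS Procedure} (and the Functionality description in Section~\ref{sec: rooted_configuration}), within $3i$ rounds $r_i$ either discovers an empty node and settles there, or fails to do so within $2i$ rounds and then retraces $parent$ pointers back to $R_c$ within $i$ further rounds. In either case $r_i$ is no longer a mover by the end of its $3i$-round window, and only at that point does $R_c$ either re-send $r_i$ (again the unique candidate mover) or release $r_{i+1}$. Since every settled robot stays put and every not-yet-dispatched robot remains at $R_c$, no other robot can become a mover while $r_i$ is out. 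Finally, a crash merely destroys a robot's information and vacates its node; it never puts a robot into transit, and if the lone mover $r_i$ itself crashes, the mover count simply drops to zero. Hence the number of movers never exceeds one.

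The step I expect to be delicate is the timing guarantee under crashes: one must be certain that a dispatched $r_i$ is genuinely back at $R_c$ (or settled) within its $3i$-round budget even when some robots along its route were crashed and subsequently replaced, since otherwise $R_c$ could release $r_{i+1}$ while $r_i$ is still wandering, producing a second mover. I would support this with a structural invariant on the settled robots: the $parent$ pointers of the currently occupied nodes always lead back to $R_c$ along a path of length at most $i-1$, and the ``unexpected parent/child'' corrections performed in the \textbf{Decision} step restore this invariant with only a bounded amount of extra work per replaced robot, so that an excursion of $r_i$ remains an $O(i)$-round forward walk to the DFS frontier followed by an $O(i)$-round retrace, with the additional slack per crash being the $O(k)$ quantity already bounded in Lemma~\ref{lem: extra_cost}. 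With this invariant established, the handoff at $R_c$ is always clean and the induction closes.
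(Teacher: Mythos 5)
Your argument rests on the same key mechanism as the paper's own proof: $R_c$ releases (or re-sends) a robot only after the currently dispatched $r_i$ has either settled or reported back within its $3i$-round budget, so no second mover can exist — the paper simply packages this as a two-line proof by contradiction rather than an induction on rounds. Your additional care about the return-trip timing under crashes (the parent-pointer invariant and the \textbf{Decision}-step corrections) is a sensible tightening of a point the paper leaves implicit, but it does not change the proof strategy, so this counts as essentially the same approach.
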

\begin{proof}
Proof by contradiction, let us suppose there exist two robots in moving condition, say $r_i$ and $r_{i+1}$. Also, assume $r_i$ started before, $r_{i+1}$. Now, as $r_i$ has not settled,   $r_i$ reports to $R_c$ every $3i$ rounds (Line~\ref{line: report_3i} of algorithm {\sc Rooted-Crash-Fault-Dispersion}). But if $r_i$ reports every $3i$ rounds then $R_c$ does not release the next robot which is contradictory to our assumption.
\end{proof}

\begin{lemma}\label{lem: loop}
A loop or cycle may be formed by the current direction pointer ($cdr$ $pointer$). The algorithm {\sc Rooted-Crash-Fault-Dispersion} successfully avoids any loop during dispersion.   
\end{lemma}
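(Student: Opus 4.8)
The plan is to show that the directed graph formed on the settled robots by the $cdr$ pointers never contains a directed cycle that a traversing robot can get trapped in, and that any transient inconsistency introduced by a crash is removed before it can do harm. First I would record the fault-free baseline: when no robot crashes, the $parent$ pointers of the settled robots form a tree rooted at $R_c$ and the $cdr$ pointer of a settled robot always points to a port that is being (or is about to be) explored for the first time; in this situation the traversal is an ordinary DFS and the $cdr$ pointers cannot close a cycle. Hence the only source of a potential loop is a crash: when a robot crashes at a node $v$ and the next arriving robot $r_u$ resettles there, $r_u$ sets $r_u.parent$ to the port it entered through and $r_u.cdr$ to the minimum free port, and this may be inconsistent with the surrounding DFS structure (as in Figure~\ref{fig:rooted}, where $r_8$ ends up recording $r_7$ as its parent), so that following $cdr$ pointers could in principle lead around a cycle.

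The core of the argument is a contradiction argument that invokes the two safeguards built into the update rules. Suppose, for contradiction, that at some round the unique moving robot (unique by Lemma~\ref{lem: one_robot_moving}) enters a directed cycle $v_1 \to v_2 \to \cdots \to v_t \to v_1$ of $cdr$ pointers and traverses it. Consider the first time it re-enters a node of this cycle that it has already visited in the current walk, say $v_1$, through some port $p$. If $p$ is neither $v_1.parent$ nor the old value of $v_1.cdr$, the cycle-detection backtrack rule forces the moving robot to leave immediately through the same port $p$ and to alter nothing, so it backtracks out of the cycle --- contradiction. Therefore $p$ must equal $v_1.parent$ or the old value of $v_1.cdr$; propagating this requirement around the whole cycle, every edge $v_i v_{i+1}$ must be, at the $v_{i+1}$ end, either the recorded parent edge of $v_{i+1}$ or its current exploration edge. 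The subcase ``every such edge is a parent edge'' contradicts the tree structure of the $parent$ pointers (restored by the correction step, discussed below), and the subcase where some $v_i v_{i+1}$ is the current exploration edge of $v_{i+1}$ contradicts the update rule that advances a node's $cdr$ past the incoming port whenever a robot enters through it, so an edge cannot simultaneously be ``incoming'' and ``still to be explored'' after that move. Either way we reach a contradiction, so no such cycle is ever traversed.

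Finally I would close the loop on the faulty case: between the moment a robot crashes and the moment a later robot $r_i$ reaches the resettled robot $r_u$, the $parent$/$cdr$ configuration may be temporarily inconsistent, but by Lemma~\ref{lem: one_robot_moving} only one robot is moving, and by the cycle-detection rule that robot backtracks out of any transient anomaly rather than circling inside it; moreover, when $r_i$ meets the ``unexpected parent'' $r_u$ it resets $r_u.parent$ and $r_u.cdr$ to the minimum available port (the \textbf{Decision} step), which re-establishes the tree structure of the $parent$ pointers and the forward-only meaning of the $cdr$ pointers, all within the $O(k)$ extra rounds charged in Lemma~\ref{lem: extra_cost}. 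The main obstacle I anticipate is the bookkeeping in the second paragraph: making precise, across one or several compounded crashes, exactly which port a resettled robot records as its parent and what ``old value of $cdr$'' means after several robots have passed through a node, so that the two subcases are genuinely exhaustive and genuinely contradict the update rules; handling several crashes on the same branch simultaneously, rather than a single crash in isolation, is the delicate part.
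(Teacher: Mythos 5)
Your proposal is sound and relies on the same three mechanisms the paper uses, but it distributes the weight differently. The paper's own proof is much shorter: it notes that a crash can make the resettled robot's $cdr$ point to an already-traversed port (Figure~\ref{fig:rooted}, A--B), then argues that by Lemma~\ref{lem: one_robot_moving} only one robot is ever in motion, so the next robot released from $R_c$ meets the robot with the unexpected pointers and corrects them via the \textbf{Decision} step (Line~\ref{line: unexpected_encounter}), which is taken to be enough to conclude that loops are avoided. You instead make the cycle-detection backtrack rule of the DFS procedure (exit immediately through the entering port when it is neither the parent nor the old $cdr$ value) the centerpiece, running a contradiction argument that no $cdr$-cycle can actually be traversed, and you use the correction step only to restore consistency afterwards; this is arguably more careful than the paper about why the single moving robot does not get trapped \emph{before} it reaches and repairs the inconsistent robot, a point the paper glosses over. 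Two caveats on your version: the subcase you dismiss by appealing to ``the tree structure of the parent pointers (restored by the correction step)'' is circular as written, since between a crash and its correction the parent pointers need not form a tree (a resettled robot's parent can point to a later-settled neighbour, which is exactly the anomaly in Figure~\ref{fig:rooted}~B); to close that subcase you should instead argue from the settling/update rules that a node's $cdr$ is never left equal to its parent port and advances on each forward pass, so a walk cannot revisit the same port and any purported cycle exhausts ports and forces a backtrack. Your honest flag about compounded crashes on one branch is the same looseness present in the paper's proof, so it does not put you below the paper's level of rigor.
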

\begin{proof}
During the execution of the algorithm, a loop or cycle may be formed if a robot $r_i$ crashes at a node $n_i$ then the current direction pointer ($cdr$ $pointer$) is set by the upcoming robot $r_{i+1}$ with the lowest port. That lowest port might have been traversed earlier. Therefore, a loop is formed (as shown in Figure~\ref{fig:rooted}, see A and B). From Lemma \ref{lem: one_robot_moving}, we know that only one robot is moving at any instance, say $r_{i+1}$. Therefore, $r_{i+2}$ (the next robot) starts after $r_{i+1}$ settles. If $r_{i+2}$ encounters any robot with an unexpected $cdr$ $pointer$ then $r_{i+1}$ changes the $cdr$ $pointer$ appropriately (Line \ref{line: unexpected_encounter} of the algorithm {\sc Rooted-Crash-Fault-Dispersion} and the Figure~\ref{fig:rooted}, see C). Thus, loops are avoided in the network.
\end{proof}


\begin{lemma}\label{lem: exact_cost}
The algorithm {\sc Rooted-Crash-Fault-Dispersion} takes at most $7k^2$ rounds and $O(\log (k+D))$ bits memory. 
\end{lemma}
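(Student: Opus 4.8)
The plan is to combine the three preceding lemmas into a single round count, being careful about how the extra cost from crashes interacts with the baseline $O(k^2)$ bound. First I would recall from Lemma \ref{lem: non_faulty} that in the absence of crashes the DFS-style sequential exploration terminates within $c k^2$ rounds for an explicit constant $c$; tracking the constant through the argument there (the telescoping sum $3\cdot 1 + 3\cdot 2 + \cdots + 3k$ for the "report-and-resend" overhead, plus the $i/2$ new edges discovered per $3i$-round phase) gives something like $3k^2$ rounds in the worst case once one accounts for re-sending each $r_i$ until it settles. Then I would invoke Lemma \ref{lem: extra_cost}: each of the at most $f \le k$ crashes contributes at most an additional $O(k)$ rounds, namely at most $k$ extra rounds per crash for the successor robot to re-traverse the up-to-$(i-2)$ edges of the vanished robot and re-settle. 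Summing over all crashes yields at most $k \cdot O(k) = O(k^2)$ additional rounds.

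Next I would assemble the bound: total rounds $\le (\text{baseline } 3k^2) + (\text{crash overhead } \le k\cdot k) + (\text{slack for pointer corrections})$. The pointer-correction steps of Line \ref{line: unexpected_encounter}, by Lemma \ref{lem: loop}, never create new traversals — they only relabel $parent$/$cdr$ values at already-settled nodes — so they cost no asymptotic rounds beyond what Lemma \ref{lem: extra_cost} already charges. Being generous with the constants to absorb the report/resend bookkeeping and the at-most-$(i+1)$ depth argument, everything fits inside $7k^2$ rounds, which is exactly the loop bound hard-coded in Algorithm \ref{alg: rooted_config}. For the memory bound, I would observe that each robot stores only its ID (in $[1,k]$, so $O(\log k)$ bits), the pointers $parent$ and $cdr$ (each a port number, $O(\log \Delta)$ bits), the binary flags $B$ and $settled$ ($O(1)$ bits), and a round counter bounded by $7k^2$ (so $O(\log k)$ bits); the total is $O(\log(k+\Delta))$ bits, matching the optimal bound. (The statement writes $D$, but since $D \le n$ and the quantity that actually appears is $\Delta$, I would reconcile this as $O(\log(k+\Delta))$.)

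The main obstacle I anticipate is making the interaction between Lemma \ref{lem: non_faulty} and Lemma \ref{lem: extra_cost} genuinely additive rather than multiplicative. A naive reading might fear that a crash during a "deep" exploration phase could cascade — forcing many subsequent robots to each pay $O(k)$ — but the key point, which I would argue carefully, is that once $r_{i+1}$ re-settles at the vacated node $v_i$ and the next robot $r_{i+2}$ corrects any inconsistent pointers (Lemma \ref{lem: loop}), the configuration is again a valid partial DFS, so the "accounting" resets: the $O(k^2)$ baseline of Lemma \ref{lem: non_faulty} applies to the remaining work, and each crash is charged exactly once for $O(k)$ rounds. Establishing this "one crash, one $O(k)$ charge" invariant — essentially that a crash cannot be amplified by later crashes or by the exploration structure — is the crux, and it rests on Lemma \ref{lem: one_robot_moving} ensuring there is never more than one robot in transit to cause compounding confusion. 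With that in hand, $3k^2 + k\cdot O(k) \le 7k^2$ closes the argument.
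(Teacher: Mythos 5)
Your proposal is correct and follows essentially the same route as the paper's proof: it combines the explicit-constant version of Lemma~\ref{lem: non_faulty} for the non-faulty baseline with the per-crash $O(k)$ overhead of Lemma~\ref{lem: extra_cost} summed over $f\leq k$ crashes to land within the $7k^2$ budget, and then counts the stored fields (ID, $parent$, $cdr$, flags) to get $O(\log(k+\Delta))$ bits. The only differences are cosmetic: the paper splits the fault-free cost into two $3k^2$ terms (settling phases plus edge traversals) rather than your single baseline-plus-slack accounting, and your explicit "one crash, one $O(k)$ charge" additivity argument and the $D$-versus-$\Delta$ reconciliation are reasonable elaborations of points the paper leaves implicit.
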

\begin{proof}
In case of round complexity, a non-faulty set-up from Lemma \ref{lem: non_faulty}, the total number of rounds are $3(1+2+ \dots + k)<3k^2$ (in the best case where $r_i$ finds the empty node within $3i$ rounds). Additionally, a robot can traverse at most $i/2$ new edges in $3i$ rounds (in a particular phase) without settling down on an empty node (in the worst case). Therefore, round complexity for $k(k-1)/2$ edges in the non-faulty setup is $<3k^2$. Moreover, from Lemma \ref{lem: extra_cost}, we know that the extra cost incurred for $f$ robots' crashing is at most $fk$. Hence, overall round complexity is at most $3k^2+3k^2+k^2=7k^2$.

In case of memory complexity, each robot stores its ID which takes $O(\log k)$ bit space. Along with that parent pointer and current direction pointer takes $O(\log \Delta)$ bit memory each. While the backward pointer take a single bit. Therefore, the memory complexity is $O(\log(k+ \Delta))$. 
\end{proof}

From the above discussion, we conclude the following result.

\begin{theorem}\label{thm: single-source}
Consider any rooted initial configuration of $k \leq n$ mobile robots, out of which $f \leq k$ may crash, positioned on a single node of an arbitrary, anonymous $n$-node graph $G$ having $m$ edges, in synchronous setting {\sc Dispersion} can be solved deterministically in $O(k^2)$ time with $O(\log (k+\Delta))$ bits memory at each robot, where $\Delta$ is the highest degree of the graph.
\end{theorem}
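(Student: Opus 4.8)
The plan is to derive Theorem~\ref{thm: single-source} by assembling the structural lemmas already established for algorithm {\sc Rooted-Crash-Fault-Dispersion}. Three things must be shown: (i) when the $7k^2$-round loop of Algorithm~\ref{alg: rooted_config} ends, the surviving robots occupy pairwise distinct nodes (correctness of {\sc Dispersion}); (ii) this actually happens within $O(k^2)$ rounds; and (iii) each robot uses only $O(\log(k+\Delta))$ bits. Items (ii) and (iii) are precisely the content of Lemma~\ref{lem: exact_cost}, so the bulk of the work is item (i), which I would establish by an invariant carried across phases.

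For correctness I would first argue that the execution is, at every instant, a \emph{sequential} DFS: by Lemma~\ref{lem: one_robot_moving} at most one robot is in transit, so the settled robots together with their $parent$, $cdr$ and $B$ fields always encode a single partially built DFS tree rooted at $R_c$, with the moving robot walking its frontier according to the \textbf{forward}/\textbf{backtrack} rules of Section~\ref{DFS Procedure}. Next I would handle crashes. When a settled robot at a node $v$ crashes, $v$ becomes empty and its pointers vanish; the next robot to reach $v$ settles there and re-initialises $v.cdr$ to the least available port, which may point back into explored territory and create a stale cycle among the $cdr$ pointers. Lemma~\ref{lem: loop}, together with the \textbf{Decision} rule in Line~\ref{line: unexpected_encounter}, shows that the very next robot launched from $R_c$ detects the inconsistent child (an ``unexpected parent'') and rewrites its $parent$ and $cdr$ to a fresh port, so no robot is ever trapped in a cycle and the frontier keeps advancing into genuinely unexplored edges. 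The invariant I would maintain is: at the start of phase $i$ the settled robots form a valid DFS tree on the visited subgraph with all $B$-flags set correctly, and a single crash perturbs this only locally, in a way the phase-$(i{+}1)$ robot restores. Since every crash removes at most one settled robot, is repaired after at most $O(k)$ extra rounds by Lemma~\ref{lem: extra_cost}, and since in any crash-free stretch the sequential DFS provably reaches every empty node (the analysis of Lemma~\ref{lem: non_faulty}), after at most $f$ crashes the traversal explores all of $G$ and every non-faulty robot settles on a distinct node; a robot that finds no empty node within $2i$ rounds returns to $R_c$ within $3i$ rounds (Line~\ref{line: report_3i}), preventing two robots from moving concurrently and preserving the invariant.

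For the time bound I would invoke Lemma~\ref{lem: exact_cost} directly: the crash-free cost is at most $3k^2$ for the ``settle within $3i$ rounds'' branches plus at most $3k^2$ for re-traversing the at most $k(k-1)/2$ already-settled edges in the ``explore $i/2$ new edges per phase'' branches, and each of the $f\le k$ crashes adds at most $k$ rounds by Lemma~\ref{lem: extra_cost}, for a total of at most $3k^2+3k^2+k^2=7k^2=O(k^2)$ rounds — so the loop length in Algorithm~\ref{alg: rooted_config} is sufficient for completion. For memory, each robot stores its remapped ID in $O(\log k)$ bits, the $parent$ and $cdr$ port numbers in $O(\log\Delta)$ bits each, and the single-bit flags $B$ and $settled$, giving $O(\log(k+\Delta))$ bits, matching the $\Omega(\log\max(k,\Delta))$ lower bound of~\cite{KMS19}.

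The step I expect to be the main obstacle is the crash-handling part of correctness: carefully proving that the single repair rule of Line~\ref{line: unexpected_encounter} always suffices — that one corrective pass by the next robot restores a globally consistent DFS state even when several robots have crashed at different places and at awkward moments, so that no node is permanently skipped and no edge is re-explored forever. This is exactly where the phase-wise invariant, anchored by Lemmas~\ref{lem: one_robot_moving} and~\ref{lem: loop}, does the heavy lifting, and I would spell it out as the technical core of the argument.
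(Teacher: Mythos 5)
Your proposal takes essentially the same route as the paper: the theorem is concluded by assembling the preceding lemmas, with the identical $3k^2+3k^2+k^2=7k^2$ round accounting and memory breakdown of Lemma~\ref{lem: exact_cost}, and correctness resting on Lemmas~\ref{lem: one_robot_moving} and~\ref{lem: loop}. The only difference is that you articulate an explicit phase-wise DFS-tree invariant for crash repair that the paper leaves implicit, which is a sensible elaboration rather than a different approach.
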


\section{Crash-Fault Dispersion for Arbitrary Configurations}\label{sec: cluster}

In this configuration setting, the robots are distributed across the graph in clusters such that there are $C=\{C_1,\ldots,C_l\}$ groups of robots at $l$ different nodes at the start such that $\sum_{i} C_i=k$. The goal of the dispersion is to ensure that the robots are dispersed among the graph vertices such that each node has at most one robot. In this setting, we assume that the robots are aware of $k, f,\Delta, l$ and $m$.\\


\noindent\textbf{Procedure:} 
Our protocol runs in phases, in which each phase consists of $\text{min}(m,k\Delta, k^2)$ rounds. At the start of each phase, each cluster begins a $counter$ that counts down from $\text{min}(m,k\Delta, k^2)$. Each cluster $C_i$ then begins exploring the network simultaneously via the traditional DFS algorithm(in the trivial case of a  singleton cluster consisting of only one robot, it considers itself dispersed). Unlike the rooted configuration, individual robots do not explore and return, the entire cluster moves together. Whenever a cluster encounters a new (empty) node in the network, the robot with the current highest ID in the cluster settles, and sets its pointers appropriately. At the end of each round, the counter is decreased by $1$. When the counter becomes zero, it signals the end of the phase, and all flags are reset. That is all pointers become null, including the pointers of already settled robots.  
After that, each cluster starts exploring the network with its current node as a point of origin. This continues until all robots in the cluster settle. Details of this procedure can be found in the pseudocode ~\ref{alg: cluster_config}\\

\noindent\textbf{Detailed Procedure:} There are two main parts to the protocol, i) exploration, ii) encounter. Exploration deals with the general procedure involved in exploring the graph. While encounter deals with the details involved in robots from different clusters meeting.

Let's begin with all the information stored at a robot. Each robot $r$ in a cluster $C_i$ consists of the following pointers $cid$, $parent$, $cdr$, $priority$, and $B$ (backtrack). The pointer $cid$ denotes the ID of the cluster it belonged to when a robot settles. $cid$ of a cluster $C_i$ is determined at the start of the phase, and is the ID of the robot with the highest ID.  When a robot decides to settle at a node, the $parent$ pointer keeps track of the port through which it entered the node. Similarly, the $cdr$ pointer is used to keep track of the port through which a cluster leaves the node in which it is settled. The $priority$ pointer of a settled robot keeps track of its priority in various clusters, originally this is simply the $cid$ of the cluster it was part of, that is the priority of a cluster is simply its $cid$. However, a robot's priority may change if a higher priority cluster discovers it and updates its priority pointer. In our work, priority is decided by the cluster's ID, that is, between two clusters, the cluster with the higher $cid$ has higher precedence. And of course, the $B$ of the backtrack pointer keeps track of the backtrack status of its DFS. In addition to all of these, each robot also has a field called $counter$, which is set to $\text{min}(m,k\Delta, k^2)$ at the beginning of each phase. Note that since all robots set the counter at the beginning of the phase simultaneously, the counter has the same value across all robots.\\

           

\noindent\textbf{Exploration:} 
As mentioned before, as long as a cluster is non-empty, at the beginning of each phase, each cluster begins exploring the graph via the traditional DFS until the cluster is empty or it encounters a robot from a higher priority cluster (more on this in the encounter section). 
In each phase, each robot in any cluster $C_i$ sets its $cid$ and $priority$ to the highest ID in the cluster, and its counter to $\text{min}(m,k\Delta, k^2)$. We consider the node in which $C_i$ is at the start of the phase to be its root. $C_i$ then follows the traditional DFS format for exploration. It leaves the node via the smallest unexplored port. If the node is empty, the robot with the current minimum ID, say $r$, sets its \textit{parent} and $cdr$ pointers and settles at the node.

The update function for the $cdr$ pointer is exactly the same as the one in the rooted case, i.e., it follows the traditional DFS procedure, except that all the robots in the cluster move through the exit port. Here, the robots use a similar DFS traversal technique as explained in DFS Procedure~\ref{DFS Procedure}. In the arbitrary case, the whole group moves together except for the robots that have settled. Whenever an exit port is calculated for moving, every (unsettled) robot in the group moves out through the port into the neighbouring node. 
All robots in $C_i$ decrease their counter by one and $C_i$ leaves through the port in $r.cdr$. If a cluster ever finds itself returning to a node with a robot $r$ from its own cluster and it has exhausted all of the ports in which $r$ has settled, then it sets $r$'s backtrack flag. Once a phase has finished, if the cluster is non-empty  it resets all flags and counter and begins DFS once again. During exploration, if the cluster $C_i$ reaches a node $u$ whose degree is $k$,  then they use BFS to explore the neighbourhood of $u$ and settle the robots of $C_i$ in at most $O(k)$ rounds. However, here we have not explored what happens if a robot from a cluster $C_i$ meets a robot from $C_j$. That brings us to the next important part of the protocol, the encounters. Details of the exploration part of a cluster can be found in Algorithm~\ref{alg:explore}.\\

\noindent\textbf{Encounter:} This section contains the explanation of the encounter part of the protocol (see Algorithm~\ref{alg:encounter}). When a robot (or cluster) meets, that is \textit{encounters} a robot from a different cluster, the next step in exploration is decided based on priority. Simply put, the robot with higher priority always takes precedence as follows. There are two distinct scenarios, i) a cluster finds an already settled node ii) multiple clusters meet on the same node.
In the first case, if a cluster with a higher priority (say $C_i$) finds a robot $r_p$ from a lower priority cluster (say $C_j$) on a node, it sets $r_p$'s priority to its own priority, resets $r_p$'s parent and $cid$ to its own, and finally sets $r_p$'s $cdr$ (to the minimum unexplored port the higher priority cluster has not explored so far) and continues its DFS. If on the other hand, a lower priority cluster finds a robot $r_p$ with a higher priority, it stops its exploration and just continues decreasing its counter at every round till the end of the phase, and begins the exploration in the next phase. Note, if a cluster finds a settled robot whose flags have been reset (i.e., set to $null$), then it's the same scenario as that of finding a robot from a lower priority cluster. The settled robot takes the priority and ID of the newly arrived cluster.

In the second scenario, if two (or more) clusters meet, the clusters merge and take on the priority of the cluster with the highest priority among them. They stop and countdown and begin exploration as a merged cluster in the next phase. See Figure~\ref{fig:cluster} for an illustration of various kinds of encounters.
           

{Note that the number of clusters is non-increasing between two consecutive phases. At any phase, a cluster may either (i) disperse over the nodes completely, or (ii) survive to explore in the next phase, or (iii) merge with a higher priority cluster. Thus, the number of clusters either remains the same or decreases at the end of every phase. Now we show that after ($l+f$) phases, dispersion is achieved.}

\begin{algorithm}[H] 
    \caption{\sc Arbitrary-Crash-Fault-Dispersion}
    \begin{algorithmic}[1]
    \Require{An anonymous network of $n$ nodes where $f$ robots are faulty such that $f \leq k \leq n$. $k$ is the number of robots. The robots are distributed across the network in $C=\{C_1,C_2,\ldots,C_l\}$ clusters such that $\sum_{i} |C_i|\leq k$. The value of $f, l, m, k$ and $\Delta$ are known to the robots.}
    \Ensure{Robots' Dispersion.}
    \Statex Each robot $r$ belonging to a cluster $C_i$ maintains the following pointers. $r.cid$, $r.priority$, $r.parent$, $r.cdr$ and $r.counter$. Initially $r.counter$ is set to $\text{min}(m,k\Delta,k^2)$,where $\Delta$ is the maximum degree of the network.
    \Statex $j=0$.
    \While{$j \leq (l+f)$}
    \State $counter=\text{min}(m,k\Delta,k^2)$
         \While {$counter> 0$}
           
             \State All non-empty clusters $C_i$ perform Explore($C_i$).
             \State $counter=counter-1$.
            \EndWhile
        \State Reset all pointers to null. Set $r.counter=\text{min}(m,k\Delta, k^2)$ across all $r$.
        \State $j=j+1$.
    \EndWhile    
    \end{algorithmic}
\end{algorithm}\label{alg: cluster_config}

\begin{algorithm}[htp]
    \caption{\sc Encounter($C_i$)}
    \begin{algorithmic}[1]
    \Require{A non-empty cluster of robots $C_i$}
    
    \If{node is not empty and contains a robot $r_p$}
             
        \If{$r_{p}.cid=C_{i}.cid$}
             \If {$r_{p}.B$ is set}
             \State Return through $r.parent$.
             \Else
              \State Continue to explore. Update $r_p.cdr$ to minimum unexplored port. Move through $r_{p}.cdr$. \Comment{See Section~\ref{update} for detailed procedure.}
             \EndIf 
           
        \EndIf
        \If{$r_{p}.cid \neq C_{i}.cid$}
          \If{$r_p$ has higher priority}
          \State Wait for $counter$ to become zero.
          \Else
          \State $r_{p}.cid=C_{i}.cid$.
          \State $r_{p}.priority=C_{i}.priority$.
          \State $r_p.parent=$ port through which $C_i$ entered.
          \State Continue to explore. Update $r_p.cdr$ to minimum unexplored port. Move through $r_{p}.cdr$. \Comment{See Section~\ref{update} for detailed procedure.}
          \EndIf  
         \EndIf  
         
        \If{$r_{p}.cid=null$} \Comment{$r_p$'s flags have been reset.}
          \State $r_{p}.cid=C_{i}.cid$.
          \State $r_{p}.priority=C_{i}.priority$
          \State $r_p.parent=$ port through which $C_i$ entered.
          \State $r_{p}.cdr$ is set to the minimum unexplored port and the cluster then moves through $r_{p}.cdr$. \Comment{See Section~\ref{update} for detailed procedure.}
         \EndIf 
    \EndIf
    \If{node is not empty and contains clusters $C_s \subset C$ that are not $C_i$}
      \If{$C_i$ has highest priority}
         \State Explore ($C_i$).
        \Else 
          \State Let $C_j$ be the cluster on the node with the highest priority.
          \State $C_i$ merges with all remaining clusters in $C_s/C_j$. It takes the priority of the highest priority cluster in $C_s/C_j$.
          \State The clusters wait until the start of the next phase to begin exploration.
        \EndIf
    \EndIf    
    \end{algorithmic}
\end{algorithm}\label{alg:encounter}

\begin{algorithm}[t]
    \caption{\sc Explore($C_i$)}
    \begin{algorithmic}[1]
    \Require{A non-empty cluster of robots $C_i$}
    \If{node is empty}
    \State Settle robot with current lowest ID in $C_i$ (say $r$).
    \State $r.priority \leftarrow r.cid$.
    \State $r.counter\leftarrow counter$.
    \State $r.parent \leftarrow$ port through which $r$ entered the node.
    \State $r.cdr$ is the minimum port that hasn't been explored so far.
    \EndIf
    \If{node is not empty}
           \State Perform Encounter($C_i$).
           
    \EndIf
    \end{algorithmic}
\end{algorithm}\label{alg:explore}

\begin{figure}[!ht]
  \centering
    \includegraphics[scale=0.36]{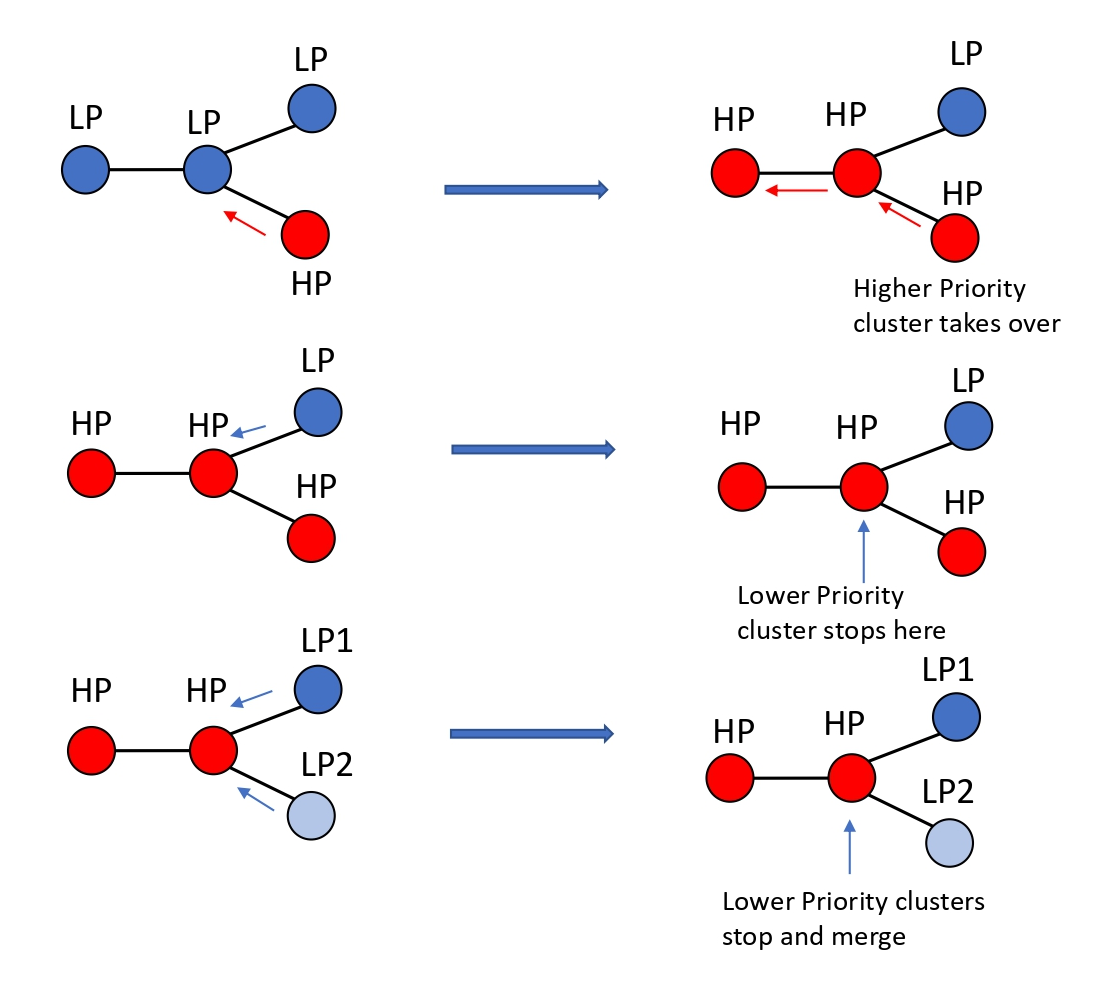}
  \caption{Illustration of various types of encounters between clusters of different priority. The arrows represent the direction the clusters are moving. Higher priority clusters are colored red and lower priority clusters blue. When a higher priority cluster meets a robot from a cluster with lower priority, it resets the pointers of the lower priority robot. Meanwhile, anytime a lower priority cluster meets a robot from a higher priority cluster, it stops and waits for the end of the phase. }\label{fig:cluster}
  \end{figure}

\begin{lemma}\label{lem: crash}
The effects of a robot crash, that is time delay caused by the presence of a crash are limited to the phase it occurs in. After that, it ceases to have an effect.
\end{lemma}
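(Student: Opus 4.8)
The plan is to show that the mechanism by which a crash can slow the algorithm down is entirely internal to a single phase, because phase boundaries perform a full reset of all pointers and counters. First I would fix a phase $P$ and a cluster $C_i$ active in $P$, and argue about what a crash during $P$ can do: a robot settled by $C_i$ (or by some earlier cluster whose territory $C_i$ has taken over) may vanish, leaving its node empty and its DFS bookkeeping lost. Within the phase this can cause $C_i$ to re-traverse some edges it had already explored, or to re-settle a robot at the emptied node with freshly-initialized $parent$/$cdr$/$B$ pointers — exactly the kind of local inconsistency handled by the encounter/exploration rules. The key observation is that none of this corrupted state survives: at line 7 of Algorithm~\ref{alg: cluster_config} every pointer (including those of already-settled robots) is set to $null$ and every $counter$ is reset to $\min(m,k\Delta,k^2)$ at the end of the phase.

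Next I would make precise what ``ceases to have an effect'' means by comparing the configuration at the start of phase $P+1$ in the run with the crash against a reference run. At the start of $P+1$ the only state that matters for future behavior is: (i) which robots are still alive and which nodes they occupy, and (ii) the current locations of the (still non-empty) clusters, since all pointers are $null$ and all counters are equal. A crash in phase $P$ affects (i) only in that one fewer robot is alive from phase $P$ onward — but that is the permanent, unavoidable consequence of the crash itself, not an additional time penalty, and it is already accounted for in the ``$+f$'' term of the phase count. It affects (ii) only through where $C_i$ happened to end up when the phase-$P$ counter expired; but since each phase anyway starts DFS afresh from whatever node the cluster currently sits on, and a full phase of $\min(m,k\Delta,k^2)$ rounds suffices for a cluster to either disperse completely or make the progress we need (this is the content used in the surrounding analysis for a crash-free phase), the cluster's arbitrary position at the end of phase $P$ does not cost it anything beyond that phase. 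Hence from phase $P+1$ onward the algorithm behaves as if the crash had happened at the phase boundary, i.e. with one fewer robot and no residual pointer damage.

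Therefore the delay attributable to a crash is bounded by the length of the single phase in which it occurs, namely $\min(m,k\Delta,k^2)$ rounds, and contributes at most one extra phase to the overall count; with $f$ crashes this gives the $f \cdot \min(m,k\Delta,k^2)$ contribution, matching the statement that crash effects do not accumulate across phases. I would close by noting the one subtlety that needs care: a crash could in principle cause a cluster $C_i$ to loop within phase $P$ (re-settling and re-emptying the same node), but because a crashed robot is gone for good and a settled robot is never un-settled except by a crash, the number of such disruptive events inside one phase is at most the number of crashes in that phase, so the phase still terminates at its counter bound regardless — which is precisely why the reset at the phase boundary is clean.

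\textbf{Main obstacle.} The step I expect to be delicate is the second one: formally justifying that a cluster's \emph{position} at the end of a crash-disturbed phase carries no penalty into the future. This requires knowing (from the crash-free analysis of a phase) that one phase of $\min(m,k\Delta,k^2)$ rounds always suffices for a non-empty cluster to make guaranteed progress starting from \emph{any} node, so that the crash's only lasting legacy is the lost robot — everything else is washed out by the reset.
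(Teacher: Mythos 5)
Your proposal is correct and rests on exactly the same key idea as the paper's proof: the end-of-phase reset of all pointers (including those of settled robots) means that any stale or inconsistent DFS state caused by a crash cannot survive into the next phase, so previously explored paths are indistinguishable from fresh ones. The extra bookkeeping you add (comparing runs, the cluster's end-of-phase position, the $f$ term in the phase count) belongs to the surrounding lemmas rather than this one, but it does not change the argument, which matches the paper's.
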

\begin{proof}
Since at the end of every phase, all robots reset their flags, including the parent and $cdr$ pointers, previously explored paths are equivalent to new unexplored paths in the current phase, as their pointers are set by the currently exploring clusters. Hence, previous phases do not have any impact on the DFS running in the current phase.
 \end{proof}

\begin{lemma}\label{lem: dispersehp}
Let $C_i$ be the cluster with the highest priority in Phase $j$. $C_i$ is guaranteed dispersion by the end of $j$ if $j$ is fault-free.
\end{lemma}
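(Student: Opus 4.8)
The plan is to show that once $C_i$ is the highest-priority cluster in a fault-free phase $j$, nothing can stop it from performing a clean DFS that settles all of its robots within the $\min(m, k\Delta, k^2)$ rounds allotted to the phase. First I would invoke Lemma~\ref{lem: crash}: since all pointers (including those of already-settled robots) are reset at the start of phase $j$, the state seen by $C_i$ at the start of $j$ is as if the graph were freshly unexplored, so no leftover inconsistency from earlier phases or earlier crashes can misdirect $C_i$. Next, I would argue that because $C_i$ has the strictly highest priority among all clusters present in phase $j$, the Encounter procedure never forces $C_i$ to wait: in scenario (i) (meeting a settled robot), $C_i$ always falls into the ``lower-priority robot'' or ``$cid=null$'' branch and simply overwrites that robot's $cid$, $priority$, $parent$, and $cdr$ and continues; in scenario (ii) (meeting other clusters on a node), $C_i$ has the highest priority and just continues to Explore. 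The one exception — meeting a robot with its own $cid$ whose backtrack flag is set — is precisely the normal DFS backtracking step, not a stall.

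The heart of the argument is then a standard DFS-correctness claim specialized to this setting: the moves dictated by the $cdr$/$parent$/$B$ update rules (identical to the rooted case of Section~\ref{DFS Procedure}, but with the whole cluster moving together) realize a depth-first traversal of $G$ from $C_i$'s phase-$j$ root. I would prove by induction on the rounds of the phase that (a) the sequence of nodes visited by $C_i$ is a valid DFS walk on $G$, (b) every robot $C_i$ settles sits on a distinct node, and (c) whenever $C_i$ re-enters a node it already ``owns'' (a robot with matching $cid$) through a non-tree port, it immediately backtracks, so no edge is traversed by $C_i$ more than a constant number of times. Since $G$ is connected and a DFS from any root visits all $n \ge k$ nodes, and since $C_i$ has at most $k$ robots, $C_i$ will encounter at least $|C_i|$ empty (or re-assignable) nodes and settle one robot per such encounter; the cluster keeps shrinking by one each time and becomes empty after finitely many new-node discoveries.

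The remaining point is the round budget: I need the DFS walk that settles all of $C_i$ to fit inside $\min(m, k\Delta, k^2)$ rounds. Here I would reuse the counting already done for the rooted algorithm (Lemma~\ref{lem: non_faulty}): a DFS that needs to settle at most $k$ robots traverses $O(\min(m, k\Delta))$ edges, and also $O(k^2)$ edges by the same ``a new node appears within distance $i+1$'' argument, since $C_i$ can have at most $k$ robots and each edge of the DFS tree is charged a constant number of times; the special BFS subroutine for a degree-$k$ node settles $C_i$'s robots in $O(k)$ extra rounds, which is within budget. Hence all of $C_i$'s robots settle before the counter hits zero, i.e., by the end of phase $j$.

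The main obstacle I anticipate is part (c) / the encounter bookkeeping: I must rule out that $C_i$, after overwriting the pointers of a robot it ``steals'' from a lower-priority cluster, could later be sent back through that robot in a way that breaks the DFS invariant (e.g., treating a non-tree edge as a tree edge, or looping). The clean way to handle this is to observe that once $C_i$ overwrites a settled robot's $cid$ to its own and sets that robot's $cdr$ to the minimum port $C_i$ has not yet explored, that node behaves exactly like a node $C_i$ itself settled — so the rooted-case loop-avoidance argument (Lemma~\ref{lem: loop}) applies verbatim within the phase. I would also need to confirm that lower-priority clusters, which freeze upon meeting a high-priority robot, never move and hence never interfere with $C_i$'s walk during the phase; this is immediate from the Encounter pseudocode. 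Assembling these observations gives the claim.
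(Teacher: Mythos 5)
Your proposal is correct and follows essentially the same route as the paper's own (much terser) argument: invoke Lemma~\ref{lem: crash} to wipe out any influence of earlier phases, observe that the highest-priority cluster is never stalled or interfered with by the Encounter rules, so its fault-free exploration reduces to the rooted single-cluster DFS of Section~\ref{DFS Procedure}, which finishes within the $\min(m,k\Delta,k^2)$-round phase budget. Your write-up merely supplies the bookkeeping details (loop avoidance after pointer overwrites, frozen lower-priority clusters, the round count) that the paper leaves implicit.
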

\begin{proof}
From Lemma~\ref{lem: crash} we know that crashes in previous rounds do not have an effect on exploration in the current phase. And, in the absence of faults during the phase itself, we see that $C_i$ exploration is equivalent to a rooted single cluster exploration of the network presented in~\ref{update}. Thus it is able to complete its dispersion using DFS without any delays or interference from other clusters, which takes less than $O(\text{min}(m,k\Delta, k^2))$ rounds to complete. 
\end{proof}

\begin{lemma} \label{lem: fault-free}
Each cluster $C_i\in C$ is guaranteed to have at least a single fault-free phase in which it has the highest priority. 
\end{lemma}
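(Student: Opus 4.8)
The plan is to combine a simple counting bound on the number of ``disturbed'' phases with an induction on the number of surviving clusters, using Lemma~\ref{lem: dispersehp} as the workhorse and Lemma~\ref{lem: crash} to decouple the phases.

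First I would bound the number of phases that can be affected by a crash. There are at most $f$ faulty robots, each crashes at most once, and by Lemma~\ref{lem: crash} a crash only affects the phase in which it happens; hence at most $f$ of the $l+f$ phases contain a crash, so at least $l$ of them are fault-free. I would also recall two structural facts already noted above: (i) the number of non-empty clusters is non-increasing from phase to phase, since in each phase a cluster either disperses completely, survives, or merges into a higher-priority cluster; and (ii) a cluster's priority is the largest ID among its still-unsettled robots and robots settle in increasing ID order, so within a phase there is a unique highest-priority active cluster, and a merge never produces a priority exceeding the maximum priority currently present.

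The core of the argument is an induction on $c$, the number of currently active clusters. The claim I would prove is: if at the start of some phase $j$ there are $c$ active clusters and at least $c$ of the phases $j,\dots,l+f$ are fault-free, then each of those $c$ clusters — followed through any merges — is the highest-priority active cluster during some fault-free phase $\ge j$, and hence by Lemma~\ref{lem: dispersehp} is fully dispersed by the end of phase $l+f$. The base case $c=0$ is vacuous. For $c\ge 1$, let $j'$ be the first fault-free phase that is $\ge j$. The unique highest-priority active cluster $C^{\star}$ at the start of $j'$ cannot merge away during $j'$ (being the top cluster, it only ever continues its own DFS upon meeting others), and no cluster of higher priority can arise during $j'$, so $C^{\star}$ stays on top throughout $j'$; by Lemma~\ref{lem: dispersehp} it disperses by the end of $j'$. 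Therefore at the start of $j'+1$ there are at most $c-1$ active clusters, while at least $c-1$ of the phases $j'+1,\dots,l+f$ are still fault-free, and the induction hypothesis finishes off the remaining clusters. Instantiating the claim with $j=1$, with $c\le l$ equal to the initial number of clusters, and with the $\ge l$ fault-free phases from the first step yields the lemma.

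The step I expect to need the most care is the bookkeeping around merges and around a cluster that happens to disperse during a \emph{faulty} phase. I must make precise what ``follow a cluster through its merges'' means — a merged cluster inherits the obligation from all of its constituents — and check that the running inequality ``number of active clusters $\le$ number of remaining fault-free phases'' is never violated, even when a faulty phase removes more than zero clusters (it can only help). Once that invariant is pinned down, every remaining piece reduces to Lemma~\ref{lem: dispersehp}.
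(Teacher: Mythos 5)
Your proposal is correct and follows essentially the same reasoning the paper intends: at most $f$ of the $l+f$ phases can be spoiled by a crash, and in each fault-free phase the unique highest-priority cluster disperses (Lemma~\ref{lem: dispersehp}), so every cluster eventually gets a fault-free phase on top. The paper dismisses this as ``quite trivial''; your counting-plus-induction argument, with the merge bookkeeping, is simply a careful write-up of that same idea.
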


\begin{proof}
Quite trivially, since there are $(l+f)$ phases, each cluster is guaranteed at least one phase in which no faults occur, and in which they are the highest priority.  
\end{proof}
\begin{lemma}\label{lem:l+f}
At the end of $(l+f)$ phases, all clusters are guaranteed to have dispersed.
\end{lemma}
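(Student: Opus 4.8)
The plan is to run a simple accounting argument over the $(l+f)$ phases, stitching together the three preceding lemmas. Call a phase \emph{clean} if no robot crashes during it and \emph{dirty} otherwise. Since a faulty robot crashes at most once, its crash belongs to exactly one phase, so at most $f$ of the $(l+f)$ phases are dirty; hence at least $l$ phases are clean. Let $n_j$ denote the number of surviving clusters at the start of phase $j$ (a singleton is regarded as already dispersed and not counted), so $n_1 = l$. The goal is to show $n_{(l+f)+1} = 0$.

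First I would record the monotonicity already noted in the text: within any phase a cluster can only (i) fully disperse, (ii) merge into a higher-priority cluster, or (iii) survive intact, and a reset at a phase boundary neither creates a cluster nor resurrects a dispersed one; hence $n_{j+1}\le n_j$ for every $j$. The key step is then: \emph{in a clean phase $j$ with $n_j\ge 1$, the cluster of highest priority during phase $j$ disperses completely by the end of phase $j$}. This is exactly Lemma~\ref{lem: dispersehp}, and its hypotheses are met because Lemma~\ref{lem: crash} guarantees that pointers set in earlier phases are wiped at phase boundaries, so the top-priority cluster's DFS in phase $j$ behaves as a fresh rooted traversal, undisturbed by lower-priority clusters (which defer by the priority rules) and finishing within $\min(m,k\Delta,k^2)$ rounds — precisely the length of a phase, with the $k^2$ term and the degree-$\ge k$ BFS subroutine covering the worst cases. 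Consequently a clean phase in which at least one cluster still survives strictly decreases the count: $n_{j+1}\le n_j-1$. (Lemma~\ref{lem: fault-free} gives the complementary viewpoint that every original cluster itself eventually gets such a phase, but for the count it suffices that \emph{some} cluster dies in each clean phase.)

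To finish, suppose for contradiction that $n_{(l+f)+1}\ge 1$. By monotonicity $n_j\ge n_{(l+f)+1}\ge 1$ for all $j\le l+f$, so every one of the $\ge l$ clean phases among the first $(l+f)$ strictly decreases the cluster count. Therefore $n_{(l+f)+1}\le n_1-l = 0$, a contradiction. Hence $n_{(l+f)+1}=0$: all clusters have dispersed, and together with the singleton convention this means the dispersion condition holds for every surviving robot.

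The main obstacle I expect is making the "strict decrease on a clean phase" step fully airtight — one must be sure the highest-priority cluster of a clean phase is never stalled by the $counter$ expiring mid-traversal, by a high-degree node, or by an encounter with another cluster, so that it truly completes dispersion within one phase; this is where the phase length $\min(m,k\Delta,k^2)$ and the deferral behaviour of lower-priority clusters are essential, and where Lemma~\ref{lem: dispersehp} carries the weight. A secondary care point is the bookkeeping at phase boundaries: one must confirm that resetting all flags and re-deriving $cid$/priority from current cluster membership keeps $n_j$ non-increasing, i.e. that a reset never splits a cluster or brings a settled robot back into play.
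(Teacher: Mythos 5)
Your proof is correct and uses the same core ingredients as the paper---Lemma~\ref{lem: crash} to decouple phases and Lemma~\ref{lem: dispersehp} to guarantee that the highest-priority cluster of a fault-free phase fully disperses---but you convert these into the $(l+f)$ bound by a different bookkeeping. The paper argues per cluster: it invokes Lemma~\ref{lem: fault-free}, which asserts that every cluster eventually gets a fault-free phase in which it is the top-priority survivor, and then applies Lemma~\ref{lem: dispersehp} to that phase. You instead run a monovariant argument on the number of surviving clusters: since each faulty robot crashes at most once, at most $f$ of the $(l+f)$ phases are ``dirty,'' so at least $l$ are clean, and every clean phase with a survivor strictly decreases the cluster count (its top-priority cluster dies), so starting from $l$ clusters the count reaches zero. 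Your route has the advantage of not needing Lemma~\ref{lem: fault-free} as a separate step; in fact the paper's ``quite trivially'' proof of that lemma implicitly hides exactly the counting you make explicit (a cluster can only become top priority after the clusters above it have dispersed or absorbed it, which is your strict-decrease observation), so your write-up is, if anything, the more airtight of the two. Both versions rest on the same asserted prerequisites---monotonicity of the cluster count across phase boundaries and the claim of Lemma~\ref{lem: dispersehp} that a phase of length $\min(m,k\Delta,k^2)$ suffices for the top cluster despite high-degree nodes and encounters---and you correctly flag these as the steps carrying the real weight rather than the phase-counting itself.
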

\begin{proof}
This follows directly from Lemmas~\ref{lem: dispersehp} and~\ref{lem: fault-free}. Each cluster is guaranteed to have at least one fault-free phase in which it has the highest priority. From~\ref{lem: crash} we know in that phase there is guaranteed dispersion. 
Hence, in $(l+f)$ phases, we are guaranteed to have total dispersion of all clusters.
\end{proof}
Thus, we have the following theorem.
\begin{theorem}\label{thm: clustered}
In the synchronous setting, the crash-tolerant algorithm for the arbitrary configuration (algorithm {\sc Arbitrary-Crash-Fault-Dispersion}) ensures dispersion of mobile robots in an arbitrary graph from an arbitrary initial configuration in $O((f+l)\cdot\text{min}(m,k\Delta, k^2))$ rounds with each robot requiring $O(\log (k+\Delta))$ bits of memory.
\end{theorem}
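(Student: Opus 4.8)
The plan is to obtain Theorem \ref{thm: clustered} as a short synthesis of the lemmas already in hand, supplying only the phase-counting bookkeeping and the per-robot memory accounting.

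\textbf{Correctness and round complexity.} First I would observe that Algorithm \ref{alg: cluster_config} runs its outer loop for $j=0,1,\ldots,l+f$, i.e., for at most $l+f+1$ phases, and that each phase executes its inner loop exactly $\min(m,k\Delta,k^2)$ times, since $counter$ is set to that value at the start of the phase and decremented once per round. By Lemma \ref{lem:l+f}, every cluster has completed its dispersion by the end of phase $l+f$. A settled robot never leaves its node (only its bookkeeping fields are ever reset, never its position), and a crashed robot is exempt from the requirement of Definition \ref{def:FT-dispersion}, so at termination every non-faulty robot sits on a distinct node. The excursion a cluster makes when it meets a node of degree $k$ (placing its robots around that neighbourhood via BFS in $O(k)$ rounds) fits inside the $\min(m,k\Delta,k^2)$ budget of a phase and thus does not affect the count. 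Multiplying the number of phases by the length of a phase yields the claimed $O((f+l)\cdot\min(m,k\Delta,k^2))$ rounds, and termination is automatic because the robots know $f,l,m,k,\Delta$ and the algorithm halts after this fixed number of phases.

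\textbf{Memory.} Here I would simply total the fields each robot maintains: $cid$ and $priority$ are robot IDs and take $O(\log k)$ bits each; $parent$ and $cdr$ are port numbers and take $O(\log\Delta)$ bits each; $B$ is a single bit; and $counter$ never exceeds $\min(m,k\Delta,k^2)\le k^2$, hence fits in $O(\log k)$ bits. Summing gives $O(\log(k+\Delta))$ bits per robot, matching the $\Omega(\log(\max(k,\Delta)))$ lower bound of \cite{KMS19}.

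\textbf{Where the real work sits.} The substantive step — that a single phase suffices for the top-priority cluster to disperse cleanly — is exactly Lemma \ref{lem: dispersehp}, which rests on Lemma \ref{lem: crash} (reset flags make stale pointers from earlier phases equivalent to unexplored ground) and on the equivalence between a highest-priority cluster's DFS and the interference-free rooted DFS of Section \ref{update}. In the guaranteed fault-free, highest-priority phase of a cluster $C_i$ (Lemma \ref{lem: fault-free}), every other cluster either waits out the phase or merely has the metadata of its already-settled robots overwritten, so nothing perturbs $C_i$'s traversal; and since the total number of robots is at most $n$, $C_i$ always finds enough empty nodes to settle all of its members. I expect this point — confirming non-interference during the decisive phase — to be the only place the argument could slip, and it is precisely what the cited lemmas discharge, so the proof of the theorem itself is just the assembly described above.
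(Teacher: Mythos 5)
Your proposal is correct and follows essentially the same route as the paper: the theorem is obtained directly by combining Lemmas~\ref{lem: crash}, \ref{lem: dispersehp}, \ref{lem: fault-free} and~\ref{lem:l+f} with the observation that there are $O(l+f)$ phases of $\min(m,k\Delta,k^2)$ rounds each, and the memory bound is the same field-by-field accounting ($cid$, $priority$, $parent$, $cdr$, $B$, $counter$) giving $O(\log(k+\Delta))$ bits. Your write-up merely makes the bookkeeping (phase count, termination, and the counter's bit-length) more explicit than the paper does.
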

Since the number of clusters $l$ and the number of faulty robots $f$ are both less than the total number of robots $k$, we have the following remark. 
\begin{remark}
If only the number of robots $(k)$ is known and all other factors are unknown to the network then the algorithm for arbitrary configuration takes $O(k^3)$ rounds.
\end{remark}

\section{Conclusion and Future Work}\label{sec: conclusion}
In this paper, we studied Dispersion for distinguishable mobile robots on anonymous port-labelled arbitrary graphs under crash faults. We presented a deterministic algorithm that solves robot dispersion in two different settings, i) with a rooted configuration of robots and ii) an arbitrary configuration of robots. We achieved the $O(k^2)$ round complexity in rooted configuration while $O((f+l)\text{min}(m,k\Delta, k^2))$ round complexity in arbitrary setting. In both cases, we used $O(\log (k+\Delta))$ memory. Some open questions that are raised by our work: i) What is the non-trivial lower bound for the round complexity in both the setting by keeping the memory $O(\log (k+\Delta))$? ii) Is it possible to give a similar round complexity for the case of arbitrary configuration as we achieved in rooted configuration? iii) Is it possible to get the same bound in the arbitrary configuration without the knowledge of $f, l, \Delta$ and  $m$? iv) Finally, whether similar bounds hold in the presence of Byzantine failures?

\bibliographystyle{plain}\small
\bibliography{references}

\end{document}